\newif\ifdraft
\definecolor{darkred}{rgb}{0.5,0,0}
\definecolor{lightblue}{rgb}{0,0.4,0.8}
\definecolor{darkgreen}{rgb}{0,0.5,0}
\DeclarePairedDelimiter\abs{\lvert}{\rvert}%
\newcommand{\eps}{\epsilon}
\newcommand{\handout}[6]{
   \renewcommand{\thepage}{\Large\bf #2 / Page \arabic{page}}
   \noindent
   \begin{center}
   \framebox{
      \vbox{
    \hbox to 5.78in { {\bf #1}
     	 \hfill #3 }
       \vspace{4mm}
       \hbox to 5.78in { {\Large \hfill #6  \hfill} }
       \vspace{2mm}
       \hbox to 5.78in { {\it #4 \hfill #5} }
      }
   }
   \end{center}
   \vspace*{4mm}
}
\def\NewTheorem#1#2{%
  \newaliascnt{#1}{theorem}
  \newtheorem{#1}[#1]{#2}
  \aliascntresetthe{#1}
  \expandafter\def\csname #1autorefname\endcsname{#2}
}
 \newtheorem{theorem}{Theorem}[section]
\theoremstyle{remark}
\newcommand{\ripref}[2]{ \it \hyperref[#2]{\it #1}~\ref{#2}    }
\newcommand{\dist}{\operatorname{dist}}
\newenvironment{proof-sketch}{\noindent{\bf Sketch of Proof}\hspace*{1em}}{\qed\bigskip}
\newenvironment{proof-idea}{\noindent{\bf Proof Idea}\hspace*{1em}}{\qed\bigskip}
\newenvironment{proof-of-lemma}[1]{\noindent{\bf Proof of Lemma #1}\hspace*{1em}}{\qed\bigskip}
\newenvironment{proof-attempt}{\noindent{\bf Proof Attempt}\hspace*{1em}}{\qed\bigskip}
\def\fnum@figure{{\bf Figure \thefigure}}
\def\fnum@table{{\bf Table \thetable}}
\long\def\@mycaption#1[#2]#3{\addcontentsline{\csname
  ext@#1\endcsname}{#1}{\protect\numberline{\csname
  the#1\endcsname}{\ignorespaces #2}}\par
  \begingroup
    \@parboxrestore
    \small
    \@makecaption{\csname fnum@#1\endcsname}{\ignorespaces #3}\par
  \endgroup}
\def\mycaption{\refstepcounter\@captype \@dblarg{\@mycaption\@captype}}
\newcommand{\mathify}[1]{\ifmmode{#1}\else\mbox{$#1$}\fi}
\newcommand{\bigO}O
\newcommand{\set}[1]{\mathify{\left\{ #1 \right\}}}
\newcommand{\remove}[1]{}
\newcommand{\ignore}[1]{}
\newcommand{\floor}[1]{\left\lfloor #1 \right\rfloor}
\newcommand\E[1]{\mathbb{E}\left[\,#1\,\right]}
\renewcommand{\Pr}[1]{\mathbb{P}\left[\,#1\,\right]}
\renewcommand{\eps}{\varepsilon}
\title{How to Color a French Flag\\
{\normalsize Biologically Inspired Algorithms for Scale-Invariant Patterning }}
\author{Alberto Ancona}
\author{Ayesha Bajwa}
\author{Nancy Lynch}
\author{Frederik Mallmann-Trenn}
\affil{Massachusetts Institute of Technology}
\date{}
\begin{document}

\maketitle

\begin{abstract}
    In the \emph{French flag problem}, initially uncolored cells on a grid must differentiate to become blue, white or red. The goal is for the cells to color the grid as a French flag, i.e., a three-colored triband, in a distributed manner. To solve a generalized version of the problem in a distributed computational setting, we consider two models: a biologically-inspired version that relies on morphogens (diffusing proteins acting as chemical signals) and a more abstract version based on reliable message passing between cellular agents.
    
    \medskip 
    Much of developmental biology research has focused on concentration-based approaches using morphogens, since morphogen gradients are thought to be an underlying mechanism in tissue patterning. We show that both our model types easily achieve a {\em French ribbon} - a French flag in the 1D case. However, extending the ribbon to the 2D flag in the concentration model is somewhat difficult unless each agent has additional positional information. Assuming that cells are are identical, it is impossible to achieve a French flag or even a close approximation. In contrast, using a message-based approach in the 2D case only requires  assuming that agents can be represented as constant size state machines. 
    
    \medskip 
    We hope that our insights may lay some groundwork for what kind of message passing abstractions or guarantees, if any, may be useful in analogy to cells communicating at long and short distances to solve patterning problems. In addition, we hope that our models and findings may be of interest in the design of nano-robots.
\end{abstract}

\setcounter{tocdepth}{1}
\tableofcontents

\vfill

\section{Introduction}

In the \emph{French flag problem}, initially uncolored cells on a grid must differentiate to become blue, white or red, ultimately coloring the grid as a three-colored triband without any centralized decision-making. Lewis Wolpert's original French flag problem formulation  \cite{Wolpert1968} and   \cite{Wolpert1969} has been applied and extended to understand how organisms determine cell fate, or final differentiated cell type, a question central to developmental biology.
Wolpert's formulation of positional information models is both complementary to and contrasted with Turing's earlier formulation of reaction-diffusion instability \cite{Turing1952}, which relies on random asymmetries that arise from activator-inhibitor dynamics in a developmental system. Our methods make use of both positional information and initial asymmetry. However, we distinguish between absolute and relative positional information to probe whether full knowledge of the coordinates is needed to solve the problem, or if strictly less information suffices.

Broadly speaking, our work is inspired by the biological mechanisms leading to cell fate decisions in the original French flag problem. These long and short-distance mechanisms inform the design of algorithms and analyses of the problem in two distributed computing contexts. More precisely, we relate a reliable message passing model (\autoref{sec:message_model}) with local cell-cell communication and a concentration-based model (\autoref{sec:concentration_model}) with morphogen gradients over long distances.

We analyze a generalized French flag problem for $k$ colors in these two computational models.
We aim to understand the resources and minimum set of assumptions required to solve the problem exactly or approximately. In particular, we study the question of whether cells need to know their exact positions and the grid dimensions in order to solve the $k$-flag problem. We hope that describing and quantifying the resources and information required might have some translation back to biology, and in particular to the mechanisms enabling scale-invariant patterning.


We begin by studying the {\em French ribbon problem}, the 1D scenario in our computational models. Both exact and approximate solutions are possible, with a general tradeoff between precision and space complexity. While both models easily achieve a French ribbon, extending 1D decision-making to the 2D setting is provably difficult in the concentration model. We show that in a 2D grid with point sources at the corners, each agent knowing its absolute distance to every source is insufficient positional information to color the grid even approximately correctly. 
On the other hand, extending to the 2D setting is easy in the message passing model. We analyze numerous efficient algorithms to demonstrate tradeoffs between time complexity, message size, memory size and precision of the obtained French flag.

We do not claim more accurate or thorough models than those proposed by the biology community, especially since downstream protein interactions are abstracted in our models. However, we hope this work may illuminate computational abstractions or guarantees that may be useful in analogy to cells communicating at long and short distances to solve patterning problems. 



\subsection{Relevant Biology Background}
Much biological research has focused on Wolpert's concept of cellular positional information, as we do in our analogy. A key principle of our models is that initial asymmetry and local communication eventually leads to long-distance transmission of the relative positional information of cellular agents, allowing for distributed decision-making.
Morphogens, molecules acting as chemical signals, are thought to underlie cell-cell communication and patterning over long distances.
Exactly how these morphogens produce scale-invariant patterns in organisms and tissues of varying size is an interesting biological question \cite{Gregor2005}.
Various other mechanisms exist for more localized cell-cell communication via cell surface receptors and the ligands that bind to them. These include the transmembrane receptor protein Notch and its membrane-bound protein (ligand) Delta, a system previously studied in a distributed computing context \cite{Afek183}. There are also physical channels such as gap junctions in animal cells and plasmodesmata in plant cells, through which local signalling molecules can pass \cite{alberts2017molecular}.

One well-studied morphogen is {\em Bicoid (Bcd)} for anterior-posterior patterning in {\em Drosophila melanogaster}, or fruit flies \cite{NussleinVolhard1980, Driever1988}. The full {\em Drosophila} patterning is determined not by Bcd alone, but also by other morphogens and a downstream gene network whose expression results in a sequence of decisions. Bicoid and Hunchback together regulate the anterior portions of the embryo while Nanos and Caudal regulate the posterior, activating gap gene expression, in turn activating pair-rule gene expression, and leading to the expression of segment-polarity genes and homeotic selector genes that ultimately specify the body plan \cite{Gilbert2000}. While some work focuses on how regulation and expression of this gene network encodes positional information \cite{Gregor2007, Tkaik2008, Dubuis2013, Petkova2019}, our analogy is to the primary morphogen inputs from maternal effect genes, modeled as external inputs to distributed cellular agents. These inputs are the maternal effect factors -- like Bcd -- which initially carry positional information. In the context of embryonic development, they are proteins synthesized in the embryo that result from direct transmission of the mother's genetic material \cite{Gilbert2000}.

Another well-studied example is {\em Sonic hedgehog (Shh)}, a morphogen for neural patterning in vertebrates, including humans \cite{Patten2000, Dessaud2008}. Graded Shh signalling from an elongated source (the notochord and the floor plate cells) induces concentration-dependent gene expression in the vertebrate neural tube. When extending the patterning problem to a 2D grid, Shh therefore provides one empirical example of how morphogen sources along a single axis or surface can reduce the 2D problem to a 1D problem in principle. Shh is consistent with Wolpert's simple positional information model to some extent, since there is a localized source and concentration determines gene expression. However, that model does not include how cells in the target region consume Shh and thus alter the Shh gradient, in turn altering the morphogen response \cite{Dessaud2008}. Our models leave these more complex interactions and cellular feedback which affect the gradient (as well as signal transduction in downstream genetic regulatory networks) to future work.


\subsection{Related Work on the French Flag Problem}

Building on earlier work on gradients \cite{Morgan1905, Dalaq1938, Stumpf1966}, Lewis Wolpert proposed the French flag problem and model in the late 1960s \cite{Wolpert1968, Wolpert1969}, focusing on the notion of positional information and its generalization to other patterning mechanisms. By receiving information that indicates their relative position, cells in a multicellular organism may form a size-invariant pattern, such as the three stripes of the French flag, by differential gene expression.

Subsequent papers validated the concept of positional information through empirical studies in model species \cite{Summerbell1973, NussleinVolhard1980, Driever1988}.
Crick explored diffusion as a primary mechanism of positional information \cite{Crick1970}. Turing studied reaction-diffusion instability as the driver of morphogenesis \cite{Turing1952}, an earlier paradigm often contrasted with positional information. The simple notion that cells, distributed along a morphogen gradient, may learn positional information via concentration has fundamentally altered the field of developmental biology \cite{Jaeger2009, Green2015}.

Simple early models of positional information in morphogenesis have been critiqued and extended \cite{Wolpert1989, Gordon2016}. The French flag problem has been constructed under various models, including growth and repair simulation models \cite{Miller2004} and reaction-diffusion experimental models \cite{Zadorin2017}. Recent empirical work in \textit{Drosophila melanogaster} explores positional information downstream of morphogen gradients: in gene regulation and expression controlled by diffusing morphogens. It is possible to quantify in information-theoretic terms the amount of information causing downstream decisions, such as decoding cellular identities using positional information from the four gap genes \cite{Dubuis2013, Petkova2019}. Furthermore, empirical measurements on the mechanisms of cellular differentiation validate the information capacity of regulatory elements to be sufficiently complex \cite{Tkaik2008} and show that cells can detect particular morphogens 
with enough precision to determine cellular outputs \cite{Gregor2007}.

To the best of our knowledge, no one has yet explored the original French flag problem in a distributed algorithms setting, with agents analogous to cells. We are able to reduce the complexity of some of our proposed algorithms using previous work on approximate counting algorithms \cite{Morris1978, Flajolet1985}.

\subsection{Results}
We first present our results for the concentration model,
where we assume that each node on a line only has access to morphogens concentrations $c_1$ and $c_2$ emitted from the endpoints of the line.   
We define the model formally in
\autoref{sec:concentration_model}.

On the positive side, it is possible to solve the French ribbon problem exactly.

\begin{theorem}
\label{thm:1dconc}
Algorithm Exact Concentration Ribbon  solves the $k$-ribbon problem in the concentration model for an $n$-process line graph of arbitrary finite length $a$, with constant time and communication complexity, given that processes have knowledge of morphogen concentrations $c_1$ and $c_2$, which have reached steady states, as well as the gradient function.
\end{theorem}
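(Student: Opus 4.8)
The plan is to exploit the single feature that distinguishes this two-source setting from a hopeless one-source setting: the two steady-state concentrations jointly pin down not only a process's distance to an endpoint but also the total length $a$, and hence its \emph{relative} position along the line, which is exactly the information a $k$-ribbon coloring requires.

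First I would write down the steady-state gradient function supplied by the model (call it $g$) as a map expressing each concentration in terms of a process's position and the line length. Writing $x$ for the distance from the left endpoint, a process observes a pair $(c_1, c_2) = (g(x, a), g(a - x, a))$. The heart of the argument is to show that the map $(x, a) \mapsto (c_1, c_2)$ is invertible on the relevant domain, so that each process can recover the pair $(x, a)$ — or, more to the point, the ratio $x/a \in [0,1]$ — from the two concentrations it reads. For the standard exponential-decay gradient $g(x,a) = A\,e^{-x/\lambda}$ this is immediate: taking logarithms recovers $x$ from $c_1$ and $a - x$ from $c_2$, so their sum yields $a$ and the ratio $x/a$ follows by a constant amount of arithmetic. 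For a general monotone gradient the same inversion goes through as long as $g$ is strictly decreasing in its first argument, which I would verify directly from the model's gradient function.

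Once each process holds its relative position $\rho = x/a$, the coloring rule is the obvious one: partition $[0,1]$ into $k$ equal subintervals and assign color $i$ to a process with $\rho \in [(i-1)/k, i/k)$. I would then check that this assignment meets the definition of a correct $k$-ribbon — the bands are contiguous, appear in the right order, and each occupies a $1/k$ fraction of the line. For the complexity claims I would observe that the concentrations are external inputs already at steady state, so no communication rounds occur — communication complexity is constant (indeed zero) — and each process performs only the fixed inversion-and-comparison computation, giving constant time.

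The main obstacle is the invertibility step. A single morphogen determines only distance-to-source and leaves $a$ unknown, which is precisely why scale-invariant coloring is impossible from one gradient; the whole theorem rests on showing that the two-source map is injective enough to recover $x/a$. For the exponential model this is clean, but I would need to confirm that the model's actual steady-state gradient — which may carry a dependence on $a$ through the finite-interval boundary conditions — still admits a \emph{unique} solution for $(x, a)$ given $(c_1, c_2)$, rather than merely being monotone in $x$ for each fixed $a$.
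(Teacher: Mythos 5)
Your proposal is correct, and it takes a genuinely different route from the paper's proof. The paper never recovers $x$ or $a$ at all: its algorithm computes the single ratio $c_2/c_1$, which for the power-law gradient $\lambda_i(C)=1/\dist(C,s_i)^\alpha$ equals a monotone function of $x/(a-x)$ — the length $a$ cancels, so the ratio is already scale-invariant — and the proof simply verifies that an agent whose correct color is $z^*$ has its ratio sandwiched between the fixed thresholds $t_{z^*-1}$ and $t_{z^*}$ of the form $\left(\frac{z}{k-z}\right)^\alpha$. You instead invert each concentration separately to recover the two distances, sum them to obtain $a$, and threshold the relative position $\rho = x/a$ against the uniform partition of $[0,1]$. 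Two observations on the comparison. First, the obstacle you flag at the end — whether the steady-state gradient might depend on $a$ through boundary conditions, spoiling unique recovery of $(x,a)$ — does not arise in this paper's model: the model postulates that $\lambda_i(C)$ is a known, invertible function of $\dist(C,s_i)$ alone (assumptions (i) and (ii) in the model section), so each concentration by itself pins down the distance to its source and your inversion step is immediate, with no joint-injectivity argument needed. Second, your route is strictly more general than the paper's as written: for a non-power-law gradient such as exponential decay $e^{-x/\lambda}$, the ratio $c_2/c_1 = e^{(2x-a)/\lambda}$ is \emph{not} scale-invariant, so the paper's fixed-threshold-on-ratio scheme would fail there, whereas your distance-recovery argument still yields $\rho$ and hence a correct coloring — this is arguably what the paper's informal claim that its upper bound extends to general invertible gradients must implicitly rely on. What the paper's approach buys in exchange is simplicity in the power-law case it instantiates: one division and $k-1$ precomputed constants, with no evaluation of an inverse function. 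Your complexity accounting (zero messages, constant local computation given steady-state inputs) matches the paper's.
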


On the negative side, we show that extending to the French flag (2D-case) with just four point-sources at the corners is infeasible.
Here, symmetry prevents us from obtaining a $\varepsilon$-approximate algorithm in this model.

\begin{theorem}
\label{thm:gradientlower}
Consider the concentration model. Fix any $\varepsilon \in (0,1/6)$.
No algorithm can produce an $\varepsilon$-approximate French flag.
\end{theorem}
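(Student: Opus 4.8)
The plan is to turn the dihedral symmetry of the four corner sources against the algorithm: a genuine French flag is \emph{not} symmetric under a left--right reflection (it swaps the blue and red stripes), whereas the only information an identical, source-agnostic agent can extract \emph{is} invariant under that reflection. Model the domain as $[0,1]^2$ (the $n\times n$ grid differs only in lower-order boundary terms), with sources at the corners $(0,0),(1,0),(0,1),(1,1)$, and let $\rho(x,y)=(1-x,y)$ denote reflection across the vertical mid-line.

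First I would establish the input-invariance lemma: the observation available at $(x,y)$ depends only on the multiset $\{d_1,d_2,d_3,d_4\}$ of distances to the four corners, read through the common gradient function. Since $\rho$ merely permutes the corners, it permutes the four distances and hence fixes this multiset; therefore the agents at $(x,y)$ and at $\rho(x,y)$ receive identical inputs. This is precisely where the hypotheses `cells are identical' and `sources indistinguishable' are essential --- were the four concentrations \emph{labeled}, trilateration from the corners would recover $(x,y)$ exactly and the whole argument would collapse, so the lemma is really the statement that the concentration model hands out strictly less than absolute position.

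Next I would promote this pointwise input symmetry to a symmetry of the \emph{output}. Because the grid topology, the source placement, and the initial data are all $\rho$-invariant, and every agent runs the same (possibly randomized) rule, the distribution over global colorings is $\rho$-invariant; in particular the color distribution at $(x,y)$ equals the one at $\rho(x,y)$. I expect this to be the main obstacle: for a purely local ``read concentration, then decide'' algorithm it is immediate, but if the model admits inter-agent communication or diffusion dynamics one must propagate the symmetry through the whole execution --- by induction on the rounds in the deterministic case, and by a symmetric coupling of the private randomness (or, more cheaply, by tracking only marginals and invoking linearity of expectation) in the randomized case.

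Finally I would count the forced errors. In any French flag the two outer stripes carry distinct colors and together cover $2/3$ of the area, and $\rho$ carries the blue stripe bijectively and measure-preservingly onto the red stripe. For each mirror pair $\{p,\rho(p)\}$ with $p$ blue and $\rho(p)$ red, the algorithm gives both cells the same color distribution, so the expected number of the two that are correct is $\Pr{f(p)=\mathrm{blue}}+\Pr{f(p)=\mathrm{red}}\le 1$; hence at least half of the blue $\cup$ red region is mis-colored in expectation, for a total misclassification fraction of at least $\tfrac12\cdot\tfrac23=\tfrac13$, uniformly over algorithms and gradient functions. (The white middle stripe is $\rho$-invariant and contributes no forced error, consistent with the threshold lying well below $1/2$; non-vertical tribands are handled identically using whichever reflection in the square's symmetry group $D_4$ swaps their outer stripes, each such reflection again fixing the distance multiset.) Since $\tfrac13>\tfrac16$, no output can be an $\varepsilon$-approximate French flag for $\varepsilon\in(0,1/6)$, with the exact constant following from the paper's definition of $\varepsilon$-approximation.
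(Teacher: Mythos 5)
There is a genuine gap, and it sits at the very foundation of your argument: the input-invariance lemma is false in the paper's model. The concentration model hands each agent a \emph{labeled} vector of readings $(\lambda_1(C),\lambda_2(C),\lambda_3(C),\lambda_4(C))$, one per source/morphogen --- this is exactly what the paper's own 1D algorithm exploits when it computes the ratio $c_2/c_1$, which would be meaningless if the two morphogens were interchangeable. (Indeed, with genuinely indistinguishable sources even the \emph{exact} ribbon would be trivially impossible and ``blue on the left'' could not be specified, so the theorem would be uninteresting in that model.) Under labeled inputs, your reflection $\rho(x,y)=(1-x,y)$ sends the input vector at $(x,y)$ to a \emph{permuted} vector at $\rho(x,y)$ (swapping $\lambda_1\leftrightarrow\lambda_2$ and $\lambda_3\leftrightarrow\lambda_4$), so mirrored agents receive different inputs, can tell left from right, and the symmetry-propagation and error-counting steps downstream never get off the ground. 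You flag this possibility yourself, but your escape hatch --- ``were the concentrations labeled, trilateration would recover $(x,y)$ exactly and the whole argument would collapse'' --- is wrong on two counts. First, trilateration needs the corner coordinates, i.e., the unknown dimensions $a\times b$: the four squared distances obey the dependency $d_1^2+d_4^2=d_2^2+d_3^2$, so they impose only three independent constraints on the four unknowns $(x,y,a,b)$, leaving a one-parameter family of consistent (position, dimension) pairs. Second, even exact knowledge of absolute $(x,y)$ would not determine the color of a scale-invariant flag, whose stripe boundaries sit at $a/3$ and $2a/3$.

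The paper's proof lives precisely in that residual one-parameter freedom, and this is the idea your proposal is missing: instead of comparing two agents \emph{within} one flag, it compares agents \emph{across two flags of different aspect ratio}. Given an $a\times b$ flag and the agent at $(x,y)=\left(\left(\tfrac13+\varepsilon\right)a,\,b/2\right)$, it constructs an $a'\times b'$ flag with $a'=a\sqrt{\tfrac{1/3-2\varepsilon}{1/3+2\varepsilon}}$ and an agent at $\left(\left(\tfrac13-\varepsilon\right)a',\,b'/2\right)$ whose distances to the \emph{respective} (labeled) sources coincide exactly with the first agent's; the construction yields positive, physically meaningful values precisely when $\varepsilon<1/6$ --- which is where the theorem's constant comes from, not from any error-mass comparison like your $\tfrac13>\tfrac16$. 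The first agent must be white and the second blue under the $\varepsilon$-approximation definition (which is per-agent, so a single forced error in a mandatory region already refutes the algorithm), and since the two agents receive identical inputs they cannot both be colored correctly. In summary, your reflection argument proves impossibility only in a strictly weaker, unlabeled-sources model; to prove the stated theorem you must replace the within-instance $D_4$ symmetry with the paper's cross-instance indistinguishability obtained by solving the underdetermined distance equations.
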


This is in sharp contrast to the message passing model where even exact solutions are possible.
Our results are summarized in  \autoref{tab:results}.
The exact statements can be found in
\autoref{sec:mess_algos}.

\begin{center}
\begin{table}[ht]
\centering
\begin{tabular}{lllllll}
Algorithm             & Rounds & Memory per Agent       & \# Msgs & Bits per Msg         & Exact         & Reference      \\
\hline
{\small Exact Count}               & $(2-1/k)n$ & $3\log{n} + O(1)$ & $O(n)$ & $O(\log{n})$ & $\checkmark$ & \ripref{Thm.}{thm:simplecounting}\\
{\small Exact Silent Count}        & $3n$ & $2\log{n}+O(1)$         &       $O(n)$       &            $O(1)$          &                 $\checkmark$     &\ripref{Thm.}{thm:orientedribbon}     \\
{\small Bubble Sort}           & $3n$ & $O(\log k)$                    &      $O(n^2)$        &           $O(\log{k})$          &                   $\checkmark$    & \ripref{Thm.}{thm:bubblesort}     \\
{\small Approx Count}        & $2n$ & $2\log\log{n}+O(1)$ & $O(n)$ & $O(\log\log{n})$ & $\times$ & \ripref{Thm.}{thm:approx_count} \\
\end{tabular}
\caption{Comparison of our algorithms in the message passing model. For brevity we ignore additive $O(k)$ terms in the round complexity.}
\label{tab:results}
\end{table}
\end{center}

\FloatBarrier
It turns out that the time complexity of Algorithm {\em Exact Count}
is tight up to an additive $2k$ term, which we show in \autoref{thm:1dorientedlb}, regardless of $k$ and the starting agent.
We would also like to highlight the memory and message complexity of {\em Bubble Sort}, which is independent of $n$ and in fact constant assuming $k=O(1)$.
Finally, we show in \autoref{sec:extend} and \autoref{sec:approx2D} how all of these algorithms can be extended to the 2D case.

\section{Models and Notation}
We now define the models formally. Throughout this paper we assume  the number of colors  $k$ to be constant\footnote{However, for clarity, we sometimes highlight the dependency on $k$.}. 

\subsection{Concentration Model}
\label{sec:concentration_model}
For concentration-based solutions to the French flag problem, we 
assume that each agent
receives concentration inputs from up to four source agents $s_1$, $s_2$, $s_3$, and $s_4$.
The \emph{measured concentration} 
a cell at 2D coordinate
$C=(x,y)$ 
receives from 
source $s_i$, $i\in[4]$ is given by the following \emph{gradient function}
$\lambda_i(C)$ and assume (i) that the function is invertible 
and (ii) that the function is monotonically decreasing in $\dist(C,s_i)$, where
$\dist(C,s_i)$, denotes the distance between cell $C$ and the source $s_i$.
For concreteness, consider the following power-law function
\begin{equation}\label{eq:input}
    \lambda_i(C)= \frac{1}{\dist(C,s_i)^\alpha}
\end{equation}
\vspace{2mm}

 where $\alpha$ is the power-law constant.
 This family of functions is also handy
 for the 1D case with coordinate $C=x$ 
and source $s_i$, $i\in[2]$ in \autoref{sec:concentration}, where we argue that coloring correctly can be reduced to comparing
 $\lambda_1(C)/\lambda_2(C) $
 to $2^\alpha$ and $2^{-\alpha}$.
 
 Though we choose a power-law for convenience, our upper bounds and lower bounds hold for more general gradient functions satisfying constraints $(i)$ and $(ii)$. 
 Deriving precise thresholds for $\lambda_1(C)$ and $\lambda_1(C)$ is more difficult when the thresholds fall close together or when the gradient function is complicated.
The more difficult these conditions, the less biologically practical it may be.
 
 We do not assume any noise, so agents have arbitrarily good precision in measuring concentration. 
 Additionally, we assume that the cells do not receive any other input apart from measured concentration. In particular, they do not have any other positional information such as knowledge of their coordinate or the total ribbon or flag size. We assume that all agents behave identically, performing the same algorithms.



For the French ribbon, we assume that the two sources $s_1$ and $s_2$ are positioned at the ends of the line.
For the French flag we assume the $s_i \in [4]$ to be positioned at the four corners. We make this assumption in order to understand if the concentration model is `strong' enough to solve the French flag problem without any additional communication. Assuming that additional sources are placed at convenient positions such as $(a/3,0)$ for example, defies the idea of of scale invariant systems.
The corner points are already distinguished in that they only have two neighbors, and if one were to place a constant number of sources, these positions are somewhat natural.

\subsection{Message Passing Model}
\label{sec:message_model}
We first consider a 1D version of the French flag problem which we call the \emph{French ribbon problem}.
We assume a line graph consisting of $n$ nodes which we refer to as agents.
We later consider the 2D version, the standard \emph{French flag problem}.
Here, the graph is a $a\times b$ grid on $n=a\cdot b$ agents.
We assume synchronous rounds. In each round, agents can communicate reliably with neighboring agents.

We assume that all agents perform the same algorithm and have no knowledge of their global position, but have a common sense of direction  $dir\in\{up,down,left,right\}$. In particular, agents at the edges know that they are endpoints of rows or columns (or both, if they are corners). Initially, all but one arbitrary agent (the {\em starting agent}) are \emph{asleep}. We refer to this agent as the \emph{source} $s$. 






The goal is to design algorithms that solve the French ribbon problem. Eventually, each agent must output a color so that the line is segmented into three colors: blue, white, and red from left to right. Formally, if $b$, $w$, and $r$ denote the number of agents of each respective color, $\max\{\abs{b-w} , \abs{b-r} , \abs{w-r}\} \leq 1$. In addition, each color should be in a single, contiguous sub-line of $G$---blue, white, red from left to right.
 We also define the more general 1D $k$-Ribbon problem in the same model, in which there are $k$ distinct colors \{1, ..., $k$\} which must form bands of approximately equal size, in increasing numerical order, along a line graph of $n$ agents.

\medskip

A solution to the French flag problem requires that every agent outputs a single color, such that the grid is divided into three vertical blocks. Every row must abide by the requirements of the French ribbon problem, such that the left side is blue and the right side is red. Furthermore, an agent should be the same color as the agent above and below it in its column. The 2D $k$-Flag problem generalizes in the same manner as above. 




\subsection{Notation}
We say a $k$-colored flag  of dimensions $a\times b$ is an \emph{$\varepsilon$-approximate} (French) flag if 
for every color $z \in \{1, ..., k\}$ the following hold.
For each agent $u$ with coordinates $(x,y)$:

\begin{enumerate}
    \item if $x \in \left[ (\frac{z-1}{k}+\varepsilon)\cdot a, (\frac{z}{k}-\varepsilon)\cdot a  \right] $, then the agent has color $z$.
    \item if $u$ has color $z$, then  $x \in \left[(\frac{z-1}{k}-\varepsilon)\cdot a, (\frac{z}{k}+\varepsilon)\cdot a  \right]$.
\end{enumerate} 

Intuitively speaking, the definition ensures two properties. First, agents that are clearly within one stripe should have the corresponding color.
Second,  agents that are close to a color border $(c_1,c_2)$  should have either color $c_1$ or $c_2$.



\section{Concentration  Model Results}\label{sec:concentration}

\subsection{1D Exact Concentration Ribbon}

\subsubsection*{Algorithm {\em Exact Concentration Ribbon}} We consider an $n$-process line of arbitrary finite length $a$ in the concentration model. Assume morphogens $m_1$ and $m_2$ (with concentrations $c_1$ and $c_2$) are each secreted by one of the endpoint processes.
We assume the underlying gradient function for concentration given position $x$ is the inverse power law in $\alpha$, which is assumed to be noiseless.

Assume that $m_1$ is secreted at $x=0$ and $m_2$ is secreted at $x=a$, we have $c_1 = 1/x^\alpha$ and $c_2 = 1/(a-x)^\alpha$. The ratio of $c_2$ to $c_1$ is then $(a-x)^\alpha / x^\alpha$. 
Each agent computes this ratio independently from measured the values of $c1$ and $c2$. Let {\em ratio} $= c_2/c_1$.
After calculating its measured ratio, each agent computes the smallest color $z$ such that {\em ratio} $\geq ((z-1)/(k-z))^\alpha$, decides color $z$, and halts.

Importantly, the algorithm is size-invariant. We prove below that it holds for any line graph of arbitrary finite length.


\begin{proof}[Proof of \autoref{thm:1dconc}]$ $

To prove correctness of Algorithm {\em Exact Concentration Ribbon}, consider a particular agent $p$ on a line graph of fixed length $a$. Note that the $t_z$ are strictly  increasing. Let $t_0=0$.
Suppose the color agent $p$ should get is $z^*$.
For $z^* < k$, it thus suffices to show that 
$c_2/c_1 \geq t_{{z^*}-1}$ and $c_2/c_1 \leq t_{z^*}$ using that the $t_z$ are strictly monotonically increasing.
For $z^*=k$ we only require the first condition and therefore we assume henceforth $z^*<k$.
We have
$c_2 \geq 1/\left(\frac{k-z^*-1}{k}a \right)^\alpha $
and 
$c_1 \leq 1/\left(\frac{z^*}{k}a \right)^\alpha $, giving
$ \frac{c_2}{c_1} \geq \frac{\left(\frac{z^*}{k}a \right)^\alpha}{\left(\frac{k-z^*-1}{k}a \right)^\alpha} 
= \left(\frac{z}{k-z-1}\right)^\alpha =t_{{z^*}-1}.
$

Moreover, 
$c_2 \leq 1/\left(\frac{k-z^*}{k}a \right)^\alpha$.
and
$c_1 \geq 1/\left(\frac{z^*-1}{k}a \right)^\alpha$.
Hence, 
$ \frac{c_2}{c_1} \leq \frac{\left(\frac{z^*-1}{k}a \right)^\alpha}{\left(\frac{k-z^*}{k}a \right)^\alpha}
\leq
 \left(\frac{z-1}{k-z}\right)^\alpha =t_{{z^*}}.
$
This completes the correctness proof.



%
Finding color $z$ takes constant time since we assume $k=O(1)$, and constant space to store the measured concentrations, calculate the concentration ratio, and calculate thresholds.

Note that there is no message passing in this model. Therefore in the synchronous model, the time complexity of this algorithm is $O(1)$ assuming the morphogen concentrations have reached steady state.
\end{proof}


\subsection{2D Concentration Lower Bound}

\begin{figure}[h]
    \centering
    \includegraphics[width=\textwidth]{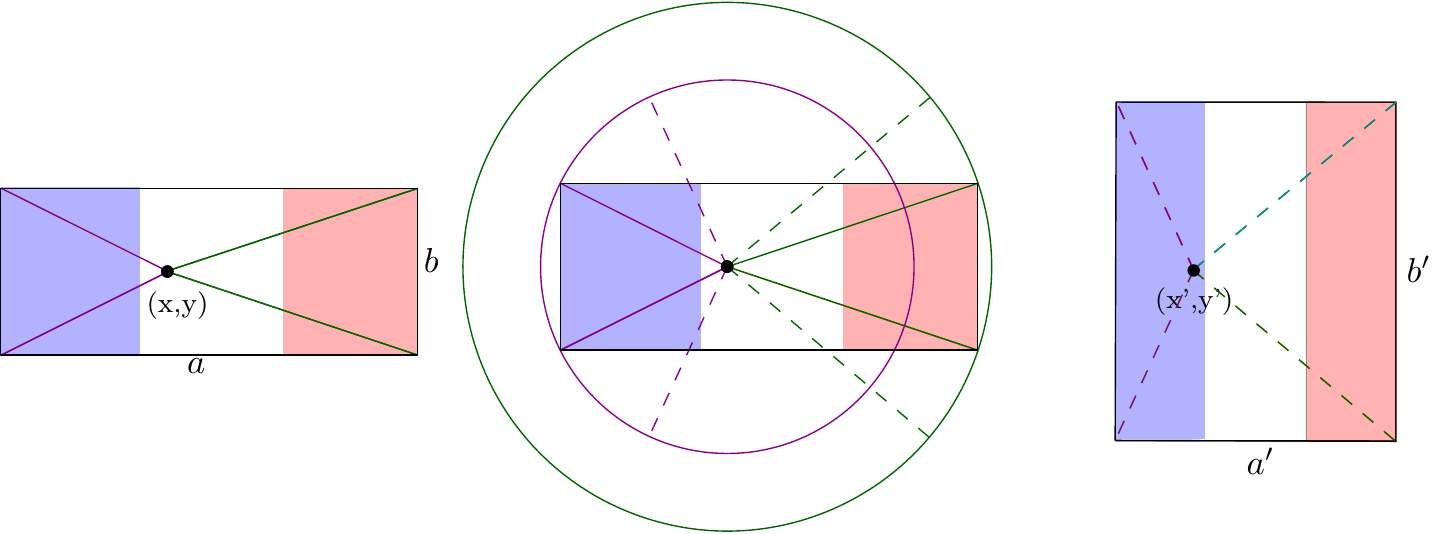}
    \caption{The l.h.s. depicts an arbitrary original flag. In the proof of \autoref{thm:gradientlower} we argue how to construct a new flag (r.h.s.) such that  there are two agents in both flags that 1) have exactly the same distances from the respective sources and 2) must choose different colors.
    Since the two agents have the same respective distance to every source, they receive the same concentration input and cannot distinguish between the settings, making it impossible to always color correctly. We construct the new flag by changing the aspect ratio in a way that maintains the distances. The figure in the middle depicts this transformation.  
    }
    \label{fig:diffusion}
\end{figure}

In this section we prove \autoref{thm:gradientlower}, showing that the concentration model, without absolute positional information, cannot produce a correct French flag (or even a good approximation) regardless of the gradient function.
The idea of the proof is as follows.
Given an arbitrary flag $G$ of dimensions $a\times b$,
we show that we can construct a flag $G'$ with dimensions $a'\times b'$ such that
there are two agents in both flags that 1) have exactly the same distances from the respective sources and 2) must choose different colors.
    Since the two agents have the same respective distance to every source, they receive the same concentration input and cannot distinguish between the settings, making it impossible to always color correctly. 
    See \autoref{fig:diffusion} for an illustration. 
    To show that such a flag $G'$ exists, we frame the constraints as a system of equations and we show that there exists a valid solution.
A formal proof can be found in \autoref{sec:missing}.
\section{Message Passing Model: Exact Coloring}\label{sec:mess_algos}

Before we present our algorithms, note there is a trivial algorithm that works as follows for $k=3$.
The starting agent sends a wakeup message to the leftmost and rightmost agents.
Then start a counter from each of these agents. When a agent receives the counters $c_{left}$ and $c_{right}$, it can determine in which stripe it is
by testing whether 
$c_{left}/c_{right}\geq 2$ or $c_{left}/c_{right}\leq 1/2$. This idea generalizes to arbitrary $k$.

 The algorithms we present improve on the trivial algorithm in various ways. \autoref{tab:results} summarizes the tradeoffs of our approaches in the message passing model. As a starting point, we observe that each agent can learn the number of agents to its left and right, from which information it can determine its own color \cite{Wolpert1969}. This principle is central to some of our algorithms.

\begin{observation}
\label{obs:countobvs}
An agent in the $k$-ribbon problem may determine its correct color knowing the number of agents on both sides of it on the line, and knowing which side should be color 1.
\end{observation}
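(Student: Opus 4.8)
The plan is to exhibit an explicit color function that each agent can evaluate from purely local data, and then verify that this function yields a legal $k$-ribbon. First I would have the agent use its two counts to reconstruct the global picture: writing $L$ for the number of agents on the color-$1$ side and $R$ for the number on the other side, the agent recovers the total size $n = L + R + 1$ and its own rank $p = L+1$ measured from the color-$1$ end. The hypothesis ``knowing which side should be color $1$'' is exactly what lets the agent decide which of its two counts plays the role of $L$, so that $p$ is defined consistently and in the same orientation across all agents.

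Next I would declare the color of the agent at rank $p$ to be
\[
z(p) = \left\lceil \frac{p\,k}{n} \right\rceil .
\]
Since this depends only on $L$ and $R$, it immediately establishes the ``may determine'' assertion; the remaining work is to confirm that the induced global coloring is a valid $k$-ribbon. Because $1 \le p \le n$ we have $z(p) \in \{1,\dots,k\}$, and because $p \mapsto z(p)$ is nondecreasing, the color classes form contiguous blocks that appear in increasing numerical order along the line, with color $1$ occupying the chosen end.

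Finally I would control the band sizes. Color $z$ is assigned precisely to the ranks $p$ satisfying $(z-1)n/k < p \le zn/k$, of which there are $\lfloor zn/k\rfloor - \lfloor (z-1)n/k\rfloor$. Writing $n = qk + s$ with $0 \le s < k$, a short computation shows this count equals $q$ or $q+1$, i.e.\ it is always either $\lfloor n/k\rfloor$ or $\lceil n/k\rceil$; hence any two bands differ in size by at most one, which is exactly the approximate-equality requirement of the $k$-ribbon problem (and specializes to $\max\{\abs{b-w},\abs{b-r},\abs{w-r}\}\le 1$ when $k=3$). The only step demanding any care is this last floor/ceiling bookkeeping, and even it is routine; everything else is immediate, which is why the claim is stated as an observation rather than a lemma.
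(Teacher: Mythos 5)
Your proposal is correct, and it actually supplies more than the paper does: the paper states this as a bare observation with no proof at all, simply asserting (with a citation to Wolpert) that an agent knowing the counts on both sides can determine its color, and then relying on this implicitly inside the algorithms. Your explicit rule $z(p) = \lceil pk/n \rceil$ with $n = L+R+1$, $p = L+1$ is a sound instantiation: monotonicity of $z$ in $p$ gives contiguity and increasing order, and your floor/ceiling count $\lfloor zn/k\rfloor - \lfloor (z-1)n/k\rfloor \in \{\lfloor n/k\rfloor, \lceil n/k\rceil\}$ correctly verifies the band-balance condition $\max\{\abs{b-w},\abs{b-r},\abs{w-r}\}\leq 1$, which the paper never spells out. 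It is worth noting that your rank-threshold formulation is equivalent to the ratio-threshold tests the paper's algorithms actually perform (e.g., comparing $n_\ell/n_r$ against $\frac{i}{k-i}$ in \emph{Approximate Count}, or the ratio thresholds $\left(\frac{z-1}{k-z}\right)^\alpha$ in \emph{Exact Concentration Ribbon}): deciding the smallest $z$ with $p \leq zn/k$ is the same as comparing $n_\ell$ to $n_r$ against the corresponding ratio cutoffs. So your argument both justifies the observation and explains why those threshold tests are the right ones; the one caveat you could mention, though it costs nothing, is the degenerate case $n < k$, where some bands are empty of size $0$ --- your formula still satisfies the size and contiguity constraints there, so no repair is needed.
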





\subsection{Exact Counting}

\subsubsection*{Algorithm {\em Exact Count}}
The starting agent stores the value $n_{mid}\leftarrow0$ and sends $n_{mid}+1$ in both directions.
Intuitively, the value measures the distance to the starting agent.
All other agents upon waking store the received value as $n_{mid}$ and forward the value $n_{mid}+1$ to the next agent in the same direction. Each agent also stores $t\leftarrow n_{mid}$ and increments $t$ every round after.

When the left endpoint receives a value for $n_{mid}$, it decides on color 1 and sends $n_\ell = 1$ to its right neighbor. The right endpoint does the same, but with color $k$ and in the opposite direction. Each agent stores $n_d$ for either direction $d\in\set{\ell,r}$
which is the number of agents to the left (right, respectively).
Upon receiving $n_d$, the agents forwards $n_d+1$ in the same direction.

After an agent receives both $n_\ell$ and $n_r$, it decides its color using \autoref{obs:countobvs}. In order to get an improved time complexity, an agent may also decide early: if an agent has a value $n_d$ and $t\geq2((k-1)\cdot n_d)-n_{mid}$, it should decide color 1 if $d$ is $\ell$ or color $k$ otherwise.

\begin{theorem}
\label{thm:simplecounting}
Algorithm Exact Count solves the  $k$-ribbon problem and requires at most $(2-\frac{1}{k})\cdot n + k$ rounds, $(4-\frac{2}{k})\cdot n\log{n}$ message bits, and $3\log{n} + \log{k} + O(1)$ bits of memory per agent.
\end{theorem}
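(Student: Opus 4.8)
The plan is to split the argument into correctness and complexity, with correctness resting on \autoref{obs:countobvs}: any agent that has learned both $n_\ell$ and $n_r$ knows its index $i = n_\ell + 1$ and the total length $n = n_\ell + n_r + 1$, hence can output the correct band color. So the first step is to check that the two counters are well-defined and monotone. The left endpoint recognizes itself upon receiving $n_{mid}$, injects $n_\ell = 1$, and each agent forwards $n_d + 1$ in the same direction, so the $i$-th agent from the left receives exactly $n_\ell = i-1$ and symmetrically $n_r = n-i$. Every agent that receives both thus decides correctly. I would also note that agents keep forwarding the counters even after deciding (so no downstream agent is starved) and that every agent eventually receives both counters unless it halts earlier via the early rule, giving liveness.

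The crux is showing the early rule never emits a wrong color. First I would record the clean fact that $t$ is a global clock: an agent woken at round $n_{mid}$ sets $t \leftarrow n_{mid}$ and increments once per round, so $t$ coincides with the global round number up to a fixed additive constant. Next, from $n_\ell$ and $n_{mid}$ an agent reconstructs both its index $i = n_\ell + 1$ and the source position $p$, and hence the exact round $2n - p - i$ at which the returning right-counter would arrive. The key calculation is then to verify that the threshold $t \ge 2(k-1)n_\ell - n_{mid}$ is, up to an additive $O(k)$, exactly the round from which ``the right counter has not yet arrived'' forces $n \ge k\,i$, i.e. that agent $i$ lies in band $1$. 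Thus an agent that reaches the threshold still lacking $n_r$ is provably color $1$ (and symmetrically for color $k$), while a band-$\ge 2$ agent always receives $n_r$ before the threshold fires and so decides normally. I expect this calibration --- matching the threshold to the counter round-trip and handling the two cases $i \lessgtr p$ --- to be the main obstacle; the additive $O(k)$ slack in the theorem is what absorbs the band-boundary rounding $n/k$ and the off-by-ones in this comparison.

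For the round bound I would identify the last agent to decide. Band-$1$ and band-$k$ agents short-circuit through the early rule, so the bottleneck is a boundary agent between bands $1$ and $2$ (or symmetrically $k-1$ and $k$): such an agent is not eligible for an early decision and its near-counter arrives quickly, so its decision round equals the arrival round of its far counter. Taking the source at an endpoint as the worst case and substituting $i \approx n/k$ into the arrival time $2n - p - i$ gives $\approx (2 - \tfrac1k)\,n$; carrying the $O(k)$ error terms through yields the claimed $(2 - \tfrac1k)n + k$.

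Finally, the message and memory bounds are bookkeeping. Three values propagate --- the outward $n_{mid}$ wave, the rightward $n_\ell$ counter, and the leftward $n_r$ counter --- each crossing at most $n$ edges and each carrying a number bounded by $n$, i.e. $\log n + O(1)$ bits; summing the hop counts (using early termination to cap how far $n_\ell$ and $n_r$ travel) and multiplying by the per-message size yields a bound of at most $(4 - \tfrac2k)n\log n$. For memory, an agent need only retain $n_{mid}$, the clock $t$, and the first counter it receives while it waits for the second, since forwarding each incoming value is transient; that is three quantities of size $\log n + O(1)$, plus $\log k$ bits for its color and $O(1)$ bits for direction and decided flags, giving exactly $3\log n + \log k + O(1)$.
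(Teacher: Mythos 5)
Your overall architecture mirrors the paper's proof: correctness of the non-early path via the two counters and \autoref{obs:countobvs}, soundness of the early rule by comparing the threshold round with the round-trip arrival of the far counter, a bottleneck analysis for the $(2-\frac{1}{k})n$ round bound (the paper instead splits into two cases on the source position, but the content is the same), and essentially identical bookkeeping for the message-bit and memory bounds (your hop-count accounting and the paper's two-messages-per-round accounting both yield $(4-\frac{2}{k})n\log n$).

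The genuine gap sits exactly at the step you deferred, and it does not close the way you predict. Your arrival formula $2n-p-i$ is correct, but rewritten in the quantities the rule actually uses it equals $2n_{\bar d} - n_{mid}$ only when the source lies between the agent and the $\bar d$ endpoint; for an agent lying between the source and that endpoint it equals $2n_{\bar d} + n_{mid}$. Hence the threshold $t \geq 2(k-1)n_d - n_{mid}$ does not match the arrival time ``up to an additive $O(k)$'' in the second geometry: the discrepancy is $2n_{mid}$, which can be $\Theta(n)$, and your claim that a band-$\geq 2$ agent always receives its far counter before the threshold fires is false there. Concretely, take $k=3$, $n=300$, source at $p=1$, agent at $i=110$: then $n_\ell = n_{mid} = 109$, the threshold fires at round $2\cdot 2\cdot 109 - 109 = 327$, while $n_r$ arrives only at round $2n-p-i = 489$, so the agent decides color $1$ although it belongs to band $2$. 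The verification you promised would therefore fail as stated; the repair is to flip the sign and use $t \geq 2(k-1)n_d + n_{mid}$, since non-arrival by that round gives $2n_{\bar d} \pm n_{mid} > 2(k-1)n_d + n_{mid}$ and hence $n_{\bar d} > (k-1)n_d$ in both geometries, while the round bound survives because the agents that must halt early in the critical case (those near the $d$ endpoint with the source on their $\bar d$ side) satisfy the $-n_{mid}$ geometry and halt by round $(2k-3)n_d + \hat n_d \leq (2-\frac{1}{k})n$. It is worth noting that the paper's own proof asserts the arrival time $2n_{\bar d}-n_{mid}$ without the case split you flagged, so your instinct to distinguish $i \lessgtr p$ was exactly right --- but carrying it out shows the calibration needs a sign correction in the algorithm's threshold, not an $O(k)$ fudge absorbed by the additive $k$ in the theorem.
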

\begin{proof}[Proof of \autoref{thm:simplecounting}]$ $

We first address correctness. Any received value of $n_\ell$ or $n_r$ is the number of agents to the left or right of the agent, because they start at 1 for agents one unit away from the endpoints, and increment as they are passed along the line. If an agent halts early, then $t\geq2((k-1)\cdot n_d)-n_{mid}$ for that agent. Note that it only takes $2(n_{\bar{d}})-n_{mid}$ time for the $n_{\bar{d}}$ message to reach an agent, so $n_{\bar{d}}>2((k-1)\cdot n_d)$, indicating that the agent is in the $d$-most band of the ribbon and thus decides correctly.

We next show round complexity, in two cases. Consider the case in which the starting agent has at least $n/k$ agents to its left and right. Then both endpoints will wake up within at most $(1-\frac{1}{k})\cdot n$ rounds, and the $n_d$ messages will propagate across the line in $n$ additional rounds, after which all agents will have values for $n_\ell$ and $n_r$. Thus, the total number of rounds will be at most $(2-\frac{1}{k})\cdot n$.

Consider the other case, when the starting agent has $\hat{n}_d \leq n/k$ agents in direction $d$. After $(2-\frac{1}{k})\cdot n$ rounds, agents at distance at most $(2-\frac{1}{k})\cdot n-\hat{n}_d$ from the endpoint in the $d$ direction will have a value for $n_d$, and agents at distance at most $(2-\frac{1}{k})\cdot n-(n-\hat{n}_d) = (1-\frac{1}{k})\cdot n+\hat{n}_d$ from the other direction $\bar{d}$ will have a value for $n_{\bar{d}}$. Note that since $\hat{n}_d\leq n/k$, all agents must then have a value for $n_d$, because $(2-\frac{1}{k})\cdot n-\hat{n}_d\geq (2-\frac{2}{k})\cdot n \geq n$. The only agents that do not also have a value for $n_{\bar{d}}$ are those with at least $(1-\frac{1}{k})\cdot n+\hat{n}_d$ agents in direction $\bar{d}$, or equivalently those with $n_d\leq n-((1-\frac{1}{k})\cdot n+\hat{n}_d) = n/k - \hat{n}_d$. But for these agents, if $t\geq 2((k-1)\cdot n_d)-n_{mid}$, they have already halted. Since $n_{mid} = \hat{n}_d-n_d$, we have $2((k-1)\cdot n_d)-n_{mid} = 2((k-1)\cdot n_d)-\hat{n}_d+n_d = (2k-1)\cdot n_d - \hat{n}_d\leq (2k-1)(n/k-\hat{n}_d)-\hat{n}_d \leq (2k-1)n/k = (2-\frac{1}{k})\cdot n$. Thus, these agents halt early after at most $(2-\frac{1}{k})\cdot n$ rounds.

We finally argue message bit complexity and space requirements. At any round, only two messages may be sent, each with values at most $n$; thus, after $(2-\frac{1}{k})\cdot n$ rounds, at most $2\log{n}((2-\frac{1}{k})\cdot n) = (4-\frac{2}{k})\cdot n\log{n}$ message bits are sent. Each agent needs to store three values in $\Theta(n)$, which takes $3\log{n} + O(1)$ bits. Each agent must also store $k$ using $\log{k}$ bits. 
\end{proof}

\subsection{Exact Silent Counting}

The message passing model is reliable, so we improve the message bit complexity in \autoref{thm:simplecounting} by using the lack of a message as information. We give an algorithm that uses silence at a small cost to round complexity.

\subsubsection*{Algorithm {\em Exact Silent Count}}

The starting agent sends the message 0 to the left and 1 to the right. If it is an endpoint, the starting agent sends a 0 and a 1 in the same, 2-bit message to its neighbor. Agents will forward any received messages in the same direction, except endpoints which will send the messages back.

The agents do additional processing. The endpoint on the $d$ side sets $n_d\leftarrow0$ upon waking and never modifies it. Otherwise, the first time an agent receives a message from direction $d$, it sets $n_{\bar{d}}\leftarrow0$, and each round thereafter the agent increments $n_{\bar{d}}$, until it receives a message from the $\bar{d}$ direction, at which point it stops incrementing $n_{\bar{d}}$ and sets $n_{\bar{d}}\leftarrow n_{\bar{d}}/2$. When an agent has final values for $n_\ell$ and $n_r$, and has sent 0 to the left and 1 to the right, it decides its color based on its stored values of $n_\ell$ and $n_r$ using \autoref{obs:countobvs} and halts.

\begin{theorem}
\label{thm:orientedribbon}
Algorithm Exact Silent Count solves the  $k$-Ribbon problem and requires $3 n$ rounds, $6 n$ message bits, and $2\log{n} + \log{k} + O(1)$ bits of memory per agent.
\end{theorem}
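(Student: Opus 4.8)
The plan is to establish the four assertions—correctness, the $3n$ round bound, the $6n$ message-bit bound, and the $2\log n + \log k + O(1)$ memory bound—in that order, with correctness resting on a single round-trip timing lemma and the resource bounds following from it. Throughout I label the agents $1,\dots,n$ from left to right, write $s$ for the position of the starting agent, and track the two tokens—the bit $0$ launched leftward and the bit $1$ launched rightward—as they propagate and reflect off the endpoints (a non-endpoint forwards in the same direction, an endpoint reverses the direction).

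The heart of the argument is the following timing lemma. Fix an agent $u$ at position $x$ and follow a single token as it bounces along the line. Between a leftward passage of that token through $u$ and its next rightward passage, the token must have reflected off the left endpoint, so the elapsed time is exactly $2(x-1) = 2 n_\ell$, where $n_\ell$ is the number of agents left of $u$; symmetrically, between a rightward passage and the next leftward passage the token reflects off the right endpoint, giving elapsed time $2 n_r$. Since each gap is even, halving the round counter recovers the exact integer counts. I would prove this by a direct induction on the leading edge's position over time (it advances one agent per round and simply reverses at an endpoint), and then check that the algorithm's triggers—start the $n_{\bar d}$ counter on the first arrival from direction $d$ and stop it on the first subsequent arrival from $\bar d$—latch precisely onto one such single-token gap. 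Given correct $n_\ell$ and $n_r$, correctness of the coloring is immediate from \autoref{obs:countobvs}.

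For the round bound I would identify the slowest-finishing agent. Every agent obtains the count toward its nearer endpoint from a token's first gap, but the count toward the far endpoint requires the token to complete a full double reflection—down to the near endpoint, across to the far endpoint, and partway back—which an agent near one end observes only after the token has essentially traversed the line twice. Bounding this by the offset of $s$ from an endpoint ($\le n$) plus one out-and-back traversal ($\le 2n$) gives the $3n$ bound. The message-bit bound is then a clean corollary: at every round exactly two tokens are in transit, each a single bit, so over at most $3n$ rounds at most $6n$ bits are sent, with the combined $2$-bit launch used when the start is an endpoint absorbed into the constant. For memory, each agent stores the two running counters $n_\ell, n_r$, each at most $2n$ and hence $\log n + O(1)$ bits, the value $k$ in $\log k$ bits for the thresholds of \autoref{obs:countobvs}, and an $O(1)$-bit control state, totalling $2\log n + \log k + O(1)$.

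The main obstacle is the correctness bookkeeping, not the resource bounds. Two points need care. First, because both tokens circulate simultaneously, the ``next arrival from the opposite direction'' that stops a counter could in principle be the other token rather than the reflected copy of the one that started it; I must argue that the triggers nonetheless isolate two consecutive passages of a single token bounding a genuine round trip, and in particular that obtaining the far count genuinely requires waiting out the double reflection (this is exactly what drives the round complexity up to $3n$). Second, the parity and off-by-one in the increment convention must be pinned down so that the counter equals exactly $2 n_\ell$ (resp.\ $2 n_r$) and not an odd neighbour, together with the boundary cases in which the starting agent is itself an endpoint and in which each endpoint sets its near-count to $0$ directly.
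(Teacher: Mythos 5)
Your proposal is correct and takes essentially the same route as the paper's proof: the same round-trip timing lemma (gaps of exactly $2n_\ell$ and $2n_r$ between consecutive passages of a single reflected token), the same three-full-traversal argument for the $3n$ round bound, two one-bit tokens per round for the $6n$ message bits, and the two counters plus $k$ for the memory bound. The cross-token interference you rightly flag is real---with ``stop on any message from $\bar{d}$'' semantics the other token can indeed terminate a counter early---and it is resolved exactly as the paper implicitly does: the two tokens carry distinct labels ($0$ and $1$), so each agent keys its counters to the first bit value it receives and ignores the other token for counting purposes, which is why the paper's proof tracks ``the 0 bit'' specifically through its three receptions.
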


\begin{proof}[Proof of \autoref{thm:orientedribbon}]$ $

We show correctness for the case when the starting agent is not an endpoint; we leave that end-case for the reader. W.l.o.g. consider an agent that first receives a 0 from the right. After $2n_\ell$ rounds, the 0 bit will return to the agent after having been forwarded to the left endpoint and back, so the stored value of $n_\ell$ at the end of the round will be correct. After $2n_r$ more rounds, the 0 bit is received again from the right and $n_r$ is correctly set. Thus, as long as the agent receives the 0 bit 3 times, it will color itself correctly. The 0 bit must then travel from the starting agent to the left, back to the right endpoint, then back to the left endpoint; at that point, all agents to the left of the starting agent will correctly color themselves. As long as the agents to the right of the starting agent return the 0 bit leftward, this will occur. We thus have correctness, because all agents only halt after forwarding the opposite bit back to the other side. The same argument applies to the 1 bit in the other direction.

The bits travel at most 3 times fully across the line each, so all agents will terminate after at most $3n$ rounds. Each round has only 2 bits sent, so the algorithm has message bit complexity $6n$. Each agent stores $k$ and two values in $\Theta(n)$, which requires only $2\log{n}+\log{k}+O(1)$ bits of memory each.
\end{proof}

\subsection{Bubble Sort Approach}
Here we show how to use bubble sort to color the flag. Assume blue, white and red are 1, 2, and 3 respectively.

\subsubsection*{Algorithm {\em Bubble Sort}}
The idea of the algorithm is to color each agent alternating with the colors of the flag, to ensure correct total counts of each color without knowing the ribbon length upfront. The algorithm performs swaps in parallel to ensure that blue elements ripple to the left, white elements to the middle, and red elements to the right.
In order to avoid cases in which a agent would like to swap its color with both neighbors at same time, we also ensure through message passing that each agent knows whether it is at an odd or even position and whether the current round is odd or even.
In an even round, any agent at an even position swaps the value (color) with its right neighbor if the right neighbor has a larger value. Odd rounds are analogous.  

\medskip

For the analysis, it seems rather complicated to define potentials that describe the color distribution of the elements at any round in order to guarantee enough progress in every round. Instead, we use the following trick in the analysis. 
Instead of coloring each agent alternating with blue, white and red, we `color' each agent with a unique ID such that all blue agents have smaller IDs than all white agents and all white agents have smaller IDs than all red agents.
We can then reduce the problem to a distributed bubble sort on $n$ distinct elements, allowing for an elegant proof by induction.

\begin{theorem}
\label{thm:bubblesort}
Algorithm Bubble Sort solves the 1-D $k$-Ribbon problem and requires at most $3n$ rounds, $n^2\log k$ message bits, and $O(\log k)$ bits of memory per agent.
\end{theorem}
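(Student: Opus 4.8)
The plan is to prove correctness by reducing the color dynamics to a sort of distinct keys, and then to bound the three complexity measures separately. For correctness I would first observe that the initial cyclic assignment (the agent at offset $i$ from the source takes color $1+(i \bmod k)$) already meets the balance requirement: every block of $k$ consecutive positions contains each color exactly once, so the counts of any two colors differ by at most one. I would then replace each agent's color by a distinct \emph{key}, giving the $j$-th color-$z$ agent a key larger than every color-$(z-1)$ key and smaller than every color-$(z+1)$ key; this is the reduction to distinct elements flagged in the algorithm description. The essential point is that the compare-exchange pattern of \emph{Bubble Sort} is \emph{oblivious}: which agent swaps with which neighbor in a round depends only on position parity and round parity, never on the colors. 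Hence the execution on keys is exactly odd-even transposition sort. Since a sort is a permutation it preserves the multiset of colors and therefore the counts, and since the keys end in increasing order from the (globally agreed) left to the right, the colors end in contiguous bands in the required order. Together these give a valid $k$-ribbon, so it remains only to bound the number of rounds until the keys are sorted and the agents halt.

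For the round bound, the crux is the classical fact that odd-even transposition sort on $n$ elements terminates within $n$ parallel phases, which I would establish via the $0$-$1$ principle. Because the network is oblivious, it sorts every input iff it sorts every monotone $0$-$1$ input, so it suffices to analyze a single $0$-$1$ key sequence. There the dynamics are transparent: viewing $0$s as left-drifting and $1$s as right-drifting particles that swap exactly on an inversion, a short direct argument shows every element reaches its sorted position within $n$ phases. I would prefer this route to a direct induction on distinct keys, where the sub-sort of the remaining elements runs \emph{concurrently} with the largest element's rightward march, so the phase bookkeeping of the inductive sub-instance overlaps that of the outer instance. It then remains to add the overhead of the distributed wrapper: an initialization wave from the source propagating the cyclic colors and position parities to both ends, and a synchronization/termination wave so that all agents agree on the round parity, begin sorting in step, and learn when to halt. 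Bounding each wave by at most $n$ rounds and the sort by $n$ phases yields at most $3n$ rounds.

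For the message and memory bounds I would simply count. In any round at most $O(n)$ adjacent pairs perform a compare-exchange, and each such message carries a single color, i.e. $O(\log k)$ bits; over $O(n)$ rounds this is $O(n^2\log k)$ message bits. For memory, each agent stores its current color ($\log k$ bits), its position parity and the round parity (one bit each), and a constant number of control flags, for $O(\log k)$ bits total. I would emphasize that both the cyclic color and the position parity are computed \emph{incrementally} as the wakeup wave passes---each agent sets its color to one more than its predecessor's modulo $k$ and flips the parity---so no agent ever stores its absolute position. This is essential, since with only $O(\log k)$ memory an agent cannot even count to $n$, and it is also why halting must be triggered by a propagating signal rather than by a local round counter.

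The main obstacle I anticipate is not the sorting bound itself, which is classical, but the distributed wrapper: coordinating a single mid-line source, staggered wakeup times, a globally consistent round parity, and signal-based termination, all within $O(\log k)$ memory, while keeping the total overhead small enough that wakeup plus sort plus termination fits inside $3n$ rounds. Pinning down exactly how the three waves are scheduled so that the constant is $3$ rather than something larger is the part that will need the most care.
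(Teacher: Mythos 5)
Your proposal is correct and shares the paper's two framing moves---the cyclic initial coloring for balance and the replacement of colors by distinct keys---but it proves the round bound by a genuinely different route. The paper does not invoke the classical odd-even transposition result: it resets time at round $n$ (once every agent is awake and colored) and proves by a double induction on the ID $i$ and the round $t$ that the position $p_i^t$ of the agent holding key $i$ satisfies $p_i^t \leq \max\{i,\, n-t+2i\}$, so key $i$ is in place by round $n+i$ and the sort finishes within $2n$ rounds after the reset, giving $3n$ in total. You instead cite the classical fact that odd-even transposition sorts in $n$ phases (via the $0$-$1$ principle, which applies since the compare-exchange schedule is oblivious) and budget $n$ rounds apiece for wakeup, sorting, and a termination wave. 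Both routes work, and your reason for preferring the $0$-$1$ principle over a naive induction is sound; but note what the paper's tailored induction buys that your accounting must obtain differently: since the induction starts from an \emph{arbitrary} configuration at the reset, it silently absorbs the staggered wakeup, whereas your phrase ``begin sorting in step'' should not be read as a simultaneous start---that would require firing-squad-style synchronization, which is overkill here. The clean fix is to let agents compare-exchange as soon as they and their neighbors are awake and observe that the $n$-phase bound holds from any configuration once the full alternating schedule is running, i.e., from round at most $n$ onward, which recovers correctness by round $2n$. Finally, your concern about signal-based halting is more than the theorem demands: the paper's algorithm never explicitly halts agents, the problem only requires the output coloring to stabilize, and the theorem's $3n$ is that stabilization time---so the third $n$ in the paper's budget is consumed by the induction's $2n$ sort bound rather than by a termination wave, and your spare $n$ could be spent on one if halting were desired.
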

\begin{proof}
By the algorithm, we have that the $i$th agent chooses blue if  $i\mod 3 = 0$, white if $i\mod 3=1$, and red otherwise.
%
Consider a `new' process $\mathcal{P}'$ 
 in which 
 the $i$th agent gets the color
\vspace{1mm} 
\begin{equation*}
\begin{cases}
\floor{i/3} & \text{ if $i\mod 3 = 0$  }\\
n/3 + \floor{i/3}  & \text{ if $i \mod 3 = 1$}\\
2n/3 + \floor{i/3}   & \text{ otherwise  }\\
\end{cases}
\end{equation*}
\vspace{1mm}

Observe that when we replace the different IDs by the colors, we obtain the original process $\mathcal{P}$. Therefore, it holds that once $\mathcal{P}'$ terminates, so does $\mathcal{P}$.
The advantage of $\mathcal{P}'$ is that due to different values for agents, we can prove
that if we reset time at round $n$,
%
we can show by an induction  on $i$ and second induction $t$ that

\begin{equation}\label{eq:fox}
p_i^t \leq \max\{ i, n-t+2i \},
\end{equation}
\vspace{1mm}

where $p_i^t$ denotes the position of the agent with ID $i$ at $t$. 
Fix an $i$ and assume the claim holds for the first $i-1$ agents and for all $t$. 
We distinguish between two cases.
If for all $j \leq i-1$ we have $p_i^{t-1} \geq p_j^{t-1}+2$, then agent $i$ will move one position to the left

\begin{equation*}
    p_i^t \leq p_{i}^{t-1} - 1 \leq \max\{ i, n-(t-1)+2i \} - 1 \leq \max\{ i, n-t+2i \}.
\end{equation*}
\vspace{1mm}

Otherwise, note that $ p_{i}^{t-1}$ can be at most $1$ (moving to the right) plus the the maximum  position of all IDs $j\leq i-1$ at round $t-1$, which is
$n-(t-1)+2(i-1)$. We have

\begin{equation*}
    p_i^t \leq \max_{j \leq i-1} p_{j}^{t-1} + 1 \leq 
 \max\{ i-1, n-(t-1)+2(i-1) \} + 1 \leq
 \max\{ i, n-t+2i \}.
\end{equation*}
\vspace{1mm}

In either case the inductive step holds and thus \eqref{eq:fox} holds.
Note that the agent with ID $i$ has to be at its correct position ($i$) 
at time $t$ satisfying $n-t+2i = i$, i.e., when
$t= n+i$.
Therefore the process terminates at time $t=2n$.
Since we shifted time by $n$ (starting only onces all agents are assigned a color), the total time is $3n$. 
The bound on the memory and message size follow trivially.





\end{proof}

\subsection{Extending from  Ribbon to Flag}\label{sec:extend}
We may solve the $k$-flag problem by extending any $k$-ribbon algorithm, with little loss in most parameters.


\subsubsection*{Algorithm {\em Up \& Down}}
The starting agent simply begins the $k$-ribbon algorithm for its row, and all agents in the row follow the algorithm to completion once awakened. However, after deciding on a color but before halting, each agent in the row sends a message to its neighbors above and below with its decided color. When an agent is awoken with a color, it decides on that color and forwards the color either above or below before halting.

\begin{theorem}
\label{thm:2d_extend_ribbon}
Given an algorithm for the $k$-ribbon problem which takes $T(n,k)$ rounds, $M(n,k)$ total message bits, and $S(n,k)$ bits of memory per agent, Algorithm  Up \& Down solves the $k$-flag problem on a $a\times b$ grid with at most $a + T(b,k)$ rounds, $ab\log{k} + M(b,k)$ total message bits, and $S(b,k)$ bits of memory per agent.
\end{theorem}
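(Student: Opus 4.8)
The plan is to exploit the structural fact that a valid $k$-flag on an $a\times b$ grid is nothing more than $a$ identical, vertically stacked copies of a single valid $k$-ribbon: the flag condition asks exactly that every row be a correctly oriented $k$-ribbon and that every column be monochromatic. Algorithm \emph{Up \& Down} realizes this by running the given ribbon algorithm once, on the starting agent's row (a line of $b$ agents, with the starting agent playing the role of the ribbon source), and then broadcasting each column's decided color vertically. Accordingly I would split the argument into a correctness part and a resource-accounting part, treating the ribbon algorithm entirely as a black box with parameters $T(\cdot,\cdot)$, $M(\cdot,\cdot)$, $S(\cdot,\cdot)$.

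For correctness, I would first invoke the assumed correctness of the underlying ribbon algorithm to conclude that the starting row terminates colored as a valid $k$-ribbon, color $1$ on the left and color $k$ on the right. I would then show, by induction on the distance from the starting row along a fixed column $j$, that every agent in column $j$ eventually decides the same color as the starting-row agent in that column: the base case is the starting-row agent itself, and the inductive step is immediate from the forwarding rule, since an agent woken from above or below adopts the received color and passes it on in the same vertical direction. Hence every row is an exact copy of the starting row, so every row is a valid $k$-ribbon, and every column is monochromatic by construction. Together these are precisely the two requirements of the $k$-flag problem.

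For the resources I would account for the two phases separately. Rounds: the ribbon phase finishes by round $T(b,k)$, and every starting-row agent launches its vertical messages in the round it decides, i.e.\ no later than round $T(b,k)$; a color then needs at most $a-1$ further rounds to reach the far end of its height-$a$ column, giving a total of at most $T(b,k)+a-1 \le a + T(b,k)$ rounds. Messages: the ribbon phase contributes $M(b,k)$ bits, and in the vertical phase each agent forwards its color at most once away from the starting row (the starting-row agents sending both up and down), so at most $ab$ color messages of $\lceil\log k\rceil$ bits are sent, contributing at most $ab\log k$ additional bits. Memory: only the $b$ starting-row agents run the ribbon and thus need $S(b,k)$ bits, while every other agent merely stores one color, i.e.\ $\lceil\log k\rceil \le S(b,k)$ bits, so the per-agent bound is $S(b,k)$.

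The only step that needs genuine care is the round bound, because the vertical propagation waves launched by different starting-row agents overlap in time and may still be travelling while the ribbon phase is in progress; the clean additive bound survives precisely because each such wave has length at most $a-1$ and is launched no later than round $T(b,k)$, so no agent can receive its color after round $T(b,k)+a-1$. A secondary point worth stating explicitly is that a single uniform algorithm must let each agent decide whether to join the ribbon computation or simply to adopt and forward a color; this is determined by whether its first wake-up message arrives horizontally or vertically, information available from the assumed common sense of direction. Everything else is routine bookkeeping.
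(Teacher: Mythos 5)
Your proposal is correct and follows essentially the same route as the paper's proof: treat the ribbon algorithm as a black box on the starting row (accounting for $T(b,k)$, $M(b,k)$, $S(b,k)$), then charge the vertical color broadcast at most $a$ extra rounds and $\log k$ bits per agent per column, with $S(b,k)\geq\log k$ covering the memory of the non-ribbon agents. Your explicit induction for column monochromaticity and your remark that the overlapping vertical waves still respect the additive round bound merely spell out details the paper leaves implicit.
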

\begin{proof}[Proof of \autoref{thm:2d_extend_ribbon}] $ $
Clearly, the algorithm takes $T(b,k)$ rounds, $M(b,k)$ message bits, and $S(b,k)$ bits of memory per agent just to complete the $k$-ribbon algorithm on the starting row. Subsequently, each agent sends messages of size $\log{k}$ up and down its column, which takes $a$ additional rounds and $a\log{k}$ additional message bits per column, accounting for the added round and message bit complexity. The size of each agent stays the same, because $S(b,k)\geq\log{k}$ simply to represent the color the agent decides.
\end{proof}

We note there are other reductions to the $k$-ribbon problem that optimize more for round complexity rather than space and message bit complexity. For brevity, we leave these to the reader. 

\subsection{Lower Bounds}

We show straightforward lower bounds for 1D and 2D cases in the message-passing model. 


\begin{theorem}
\label{thm:1dorientedlb}
No algorithm exists that can solve the $k$-Ribbon problem on an oriented line graph if all agents are identical, even if endpoints know that they are endpoints, in less than $(2-\frac{1}{k})\cdot n  - 3$ rounds.
\end{theorem}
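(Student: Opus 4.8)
The plan is to prove this lower bound by an indistinguishability (adversary) argument exploiting the finite speed of information in a synchronous line: a message travels at most one hop per round, so an agent's state at round $t$ depends only on events inside its backward ``light cone'' of radius $t$. Rather than argue for every placement of the source, I would exhibit a single hard instance, since the worst case is obtained by placing the source at an endpoint (a central source lets counting information from both ends meet after only about $n/2$ rounds and does not force $\approx 2n$ rounds). Concretely, I would put the source $s$ at the left endpoint, position $0$; this only gives the algorithm more power (the source immediately knows it is the left endpoint), so any bound proved in this setting is valid.

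Next I would set up two instances that a critical agent cannot distinguish in time. Let $G$ be the line of length $n$ (with $k \mid n$, source at position $0$), and let $G'$ be the line of length $n' = n+k$ obtained by appending $k$ fresh agents to the right of $G$, keeping the source at position $0$. Consider the agent $p$ at position $n/k$, sitting exactly at the boundary between the first two colour bands. In $G$ the balanced $k$-ribbon gives band $1$ the positions $0,\dots,n/k-1$, so $p$ must take colour $2$; in $G'$ band $1$ occupies $0,\dots,n/k$, so the very same agent $p$ must take colour $1$. Thus $p$ is forced to output different colours in the two instances, even though its distance $n/k$ to the (left) source is identical.

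I would then bound how long $p$'s views in $G$ and $G'$ coincide. The two instances agree on all agents at positions $0,\dots,n-2$ and on the wake-up wave emanating from the source; the only discrepancies are at position $n-1$ (an endpoint in $G$, interior in $G'$) and at the appended agents at positions $\ge n$. The wake-up wave reaches position $n-1$ at round $n-1$, so the earliest round at which the instances behave differently anywhere is $n-1$, and this difference, originating $n-1-\frac{n}{k}$ hops from $p$, first reaches $p$ only at round $(n-1)+\bigl(n-1-\frac{n}{k}\bigr)=2(n-1)-\frac{n}{k}$; the appended agents influence $p$ strictly later. Hence $p$'s entire view is identical in $G$ and $G'$ through round $2(n-1)-\frac{n}{k}-1$. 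Since $p$ must nonetheless decide different colours, it cannot have halted by that round, so in at least one of the two instances $p$ decides no earlier than round $2(n-1)-\frac{n}{k}=\bigl(2-\frac{1}{k}\bigr)n-2$. This already yields a running time of at least $\bigl(2-\frac{1}{k}\bigr)n-2 \ge \bigl(2-\frac{1}{k}\bigr)n-3$, matching the upper bound of \autoref{thm:simplecounting} up to an additive $O(k)$ term.

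The main obstacle is making the indistinguishability airtight: I must fix a notion of ``view'' (the sequence of incoming messages and local states $p$ observes) and prove, by induction on the round, that every agent at distance $d$ from $p$ is in an identical state in $G$ and $G'$ through round $2(n-1)-\frac{n}{k}-1-d$, which certifies that no discrepancy leaks into $p$'s light cone too early. Secondary care is needed for integrality when $k \nmid n$: the position of the boundary agent and the size of the appended block must be chosen (using $\le k$ extra agents) so that $p$'s required colour genuinely flips, and the resulting rounding is exactly what the slack between my $-2$ and the stated $-3$ absorbs. It is also worth stating explicitly that this is a worst-case statement over source placements---it is the endpoint source, forcing a full round trip from the far endpoint back to a near-left boundary agent, that is responsible for the $2n$-type bound.
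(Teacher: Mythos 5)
Your proposal is correct and takes essentially the same route as the paper's own proof: fix the starting agent at the left endpoint, append $k$ agents on the right so that the agent at the first band boundary (position $n/k$) is forced to output color $2$ in $G$ but color $1$ in $G'$, and observe that the only distinguishing information---whether the agent at position $n-1$ is an endpoint---arises at round $n-1$ and needs a further $n-1-\frac{n}{k}$ rounds to propagate back. If anything, your version is slightly sharper and more careful than the paper's: you obtain the bound $\left(2-\frac{1}{k}\right)n-2$ (for $k\mid n$), whereas the paper's computation only yields $\left(2-\frac{1}{k}\right)n-k$, and you make explicit both the forced-color counting argument and the light-cone induction that the paper leaves implicit.
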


\begin{proof}[Proof of \autoref{thm:1dorientedlb}]$ $
Suppose such an algorithm $A$ exists. Consider a line of length $n$, where the starting agent is the leftmost agent. Let $p$ be the leftmost agent $p$ that should choose color 1. It shall decide on color 1 in less than $(2-\frac{1}{k})\cdot n-k$ rounds.

Suppose $k$ additional agents were attached on the right side of the graph before the algorithm began, so that there are $n+k$ agents in the line. In this new line graph, $p$ must decide color 0, rather than color 1; otherwise, either there will be two more color 1 or two more color 2 agents than color 0 agents, and $A$ would not solve the French ribbon problem.

In order to distinguish between these two cases, $p$ must receive information from a node to the left of the $n$th agent. The $n$th agent will wake up at round $n-1$, and the information that it is not an endpoint must then propagate to $p$, which is at distance at least $(1-1/k)n-k+1$ from the $n$th agent. Therefore the information does not reach $p$ until round $(2-\frac{1}{k})n-k$. However, $p$ halted at an earlier round by our assumption, and therefore did not distinguish the two cases.
\end{proof}

\begin{theorem}
\label{thm:2dorientedlb}
No algorithm exists to solve the $k$-flag problem on an $a\times b$ grid in less than \\ $\max\set{(2-\frac{1}{k})\cdot b-k, a+b-2}$ rounds.
\end{theorem}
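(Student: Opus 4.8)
The plan is to prove the two quantities inside the maximum as separate lower bounds, since the larger of two valid lower bounds is again a valid lower bound. Both come from placing the \emph{starting agent} adversarially at a corner of the grid, say $(0,0)$ (with rows $0,\dots,a-1$ and columns $0,\dots,b-1$, so that color depends on the column index), and then exhibiting an agent that provably cannot have decided its color by the claimed round because the information it needs has not yet reached it. Throughout I use that the model is synchronous, that every non-source agent is initially asleep, and that a message crosses one edge per round; consequently an agent's state at the end of round $t$ depends only on the initial configuration within Manhattan distance $t$ of it (a standard causal-cone fact), which I would state as a short lemma first.

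Second term $(a+b-2)$. This is a pure wake-up/diameter bound. With the source at $(0,0)$, the opposite corner $(a-1,b-1)$ is at Manhattan distance $(a-1)+(b-1)=a+b-2$, so no message can reach it before round $a+b-2$. Hence at the end of round $a+b-3$ it is still asleep and has output no color, and no algorithm can have all agents decided in fewer than $a+b-2$ rounds.

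First term $\bigl((2-\tfrac1k)b-k\bigr)$. Here I adapt the one-dimensional argument of \autoref{thm:1dorientedlb} to the horizontal (stripe) direction, whose length is $b$. Keeping the source at $(0,0)$, I compare two instances: the $a\times b$ grid $G$ and the $a\times(b+k)$ grid $G'$ obtained by appending $k$ columns on the right. Exactly as in the proof of \autoref{thm:1dorientedlb}, appending $k$ columns shifts the stripe boundaries so that some agent $p=(0,j)$ near the first stripe boundary, placed at distance $(1-\tfrac1k)b-k+1$ from column $b-1$ (i.e.\ $j=\tfrac{b}{k}+k-2$), has different correct colors in $G$ and $G'$. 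The only way $p$ can tell $G$ from $G'$ is to learn whether column $b-1$ is the right edge (as in $G$) or interior (as in $G'$), and this bit is known only to the right-edge agents $(r,b-1)$. Each such agent wakes no earlier than round $r+(b-1)$ and lies at distance $r+(1-\tfrac1k)b-k+1$ from $p$, so edge/interior information reaches $p$ no earlier than round $\min_{r}\bigl(2r+(2-\tfrac1k)b-k\bigr)=(2-\tfrac1k)b-k$, attained at $r=0$. By the causal-cone property $p$ behaves identically in $G$ and $G'$ through round $(2-\tfrac1k)b-k-1$, so a correct algorithm cannot have $p$ halted with a (necessarily wrong-in-one-instance) color by then, which yields the bound.

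The step I expect to be the crux is the first term, specifically making the cross-row information flow rigorous: unlike the 1D case, $p$ could a priori learn about the right edge through paths detouring through other rows, and all $a$ agents of column $b-1$ carry the edge/interior bit, not just $(0,b-1)$. The clean resolution is the causal-cone lemma together with the explicit minimization over $r$, which shows the corner-aligned single-row path is already optimal so no detour helps; I would state that minimization as the key computation. A secondary point to verify, inherited verbatim from \autoref{thm:1dorientedlb} via the choice of $j$, is that the $\max\{\lvert b-w\rvert,\lvert b-r\rvert,\lvert w-r\rvert\}\le 1$ slack in the (approximate) flag definition still forces $p$'s correct color to differ between $G$ and $G'$.
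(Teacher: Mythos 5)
Your proposal is correct in substance, and the second term $a+b-2$ is handled exactly as in the paper (a diameter/wake-up argument from the corner opposite the starting agent). For the first term, however, you take a genuinely different route: the paper's proof of \autoref{thm:2dorientedlb} dispatches $(2-\frac{1}{k})\cdot b-k$ by a one-line reduction --- set $a=1$ and $b=n$, so the $k$-flag instance \emph{is} a $k$-ribbon instance, and invoke \autoref{thm:1dorientedlb} as a black box --- whereas you re-run the 1D indistinguishability argument inside a genuine $a\times b$ grid ($G$ versus $G'$ with $k$ appended columns), using a causal-cone lemma and the explicit minimization $\min_{r}\bigl(2r+(2-\frac{1}{k})b-k\bigr)$ to rule out cross-row detours. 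Each approach buys something: the paper's reduction is shorter, but strictly speaking it only exhibits hard instances with $a=1$, so it establishes the first term only for degenerate grids (or under the reading that the adversary chooses the dimensions); your direct argument proves the bound for every $a\times b$, which is what the theorem statement literally asserts, at the cost of the cross-row bookkeeping you correctly identify as the crux --- and your observation that the corner-aligned single-row path achieves the minimum at $r=0$ is exactly the right resolution. The caveat you flag and defer (that appending $k$ columns flips $p$'s correct color at $j=b/k+k-2$) is inherited verbatim from the paper's own proof of \autoref{thm:1dorientedlb}, whose constants are already loose there: the paper wavers between $-3$ and $-k$, and since appending $k$ columns shifts the $i$th stripe boundary by $i$ positions while the $\pm 1$ slack in the color counts can absorb a one-column shift near boundary $1$, a fully rigorous version would pick $p$ near a later boundary or append more columns. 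Since this imprecision is shared with (indeed imported from) the paper rather than introduced by you, it does not count against your approach.
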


\begin{proof}[Proof of \autoref{thm:2dorientedlb}]$ $
Assume for contradiction that we could solve the problem in less than $(2-\frac{1}{k})\cdot b-3$ rounds. Then we could solve the $k$-ribbon problem in less than $(2-\frac{1}{k})\cdot n-3$ rounds by setting $a = 1$ and $b=n$ and solving the $k$-flag problem.

The algorithm must also take at least $a+b-2$ rounds. Consider the case where the starting agent is in the top-left corner of the grid. Then the bottom-right agent will not wake and decide its color before it receives a message from the starting agent, which is at distance $a+b-2$.
\end{proof}


\section*{Conclusion}
In this paper, we demonstrate that the 1D French ribbon problem can be solved exactly and approximately in both the concentration model and the message passing model. However, the 2D French flag problem requires additional positional information in order to satisfy size invariance.

One direct extension of this work is a randomized version of the {\em Silent Count} algorithm (\autoref{thm:orientedribbon}).
An exciting new research direction is how other pattering problems can be solved in more general settings  and under the influence of noise.
Future work could develop models that better capture important biological constraints or shed light on other problems in developmental biology.
For example, one could study models in which part of an organism (e.g., a finger or the beak of a bird) grows over time.

\section*{Acknowledgements}
We thank Ama Koranteng, Adam Sealfon, and Vipul Vachharajani for valuable discussions and contributions.

{ \bibliography{./references}

}

\appendix


\section{Message Passing Model: Approximate Coloring}

Here, we use the approximation approach of Morris \cite{Morris1978} and Flajolet \cite{Flajolet1985} to reduce space complexity in exchange for a slight increase in error for the final $k$-ribbon. The randomized modification is made to our deterministic exact counting algorithm.

The following theorem gives the guarantees of each counter.
\begin{theorem}[\cite{Flajolet1985}]\label{thm:flajolet}
Consider the counter procedure of \cite{Flajolet1985}.
Using $\log\log{n} + \delta$ bits and letting $\beta = 2^{2^{-\delta}}$, the expectation of the $\log\log{n}+\delta$-bit counter is $\log_\beta((\beta-1)\cdot n + \beta)$, and the value of $n$ we could recover from the counter has standard deviation at most $n/2^{-\delta}$.
\end{theorem}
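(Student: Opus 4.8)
The plan is to reconstruct the analysis of the Morris--Flajolet probabilistic counter, whose two claims (expectation and recoverable standard deviation) both follow from one-step moment recurrences. Recall the procedure: we store a nonnegative integer $C$, initialized to $C_0 = 1$; to register an increment while $C = c$, we advance $C \to c+1$ with probability $\beta^{-c}$ and leave $C$ unchanged otherwise, and the count we recover is the affine readout $\hat n = (\beta^{C} - \beta)/(\beta - 1)$. First I would settle the bit-budget claim: a typical value of $C$ after $n$ increments is $C \approx \log_\beta((\beta-1)n) = \Theta(2^{\delta}\log n)$, since $\ln\beta = 2^{-\delta}\ln 2$, so representing $C$ costs $\log_2 C \approx \delta + \log\log n$ bits; a Markov bound on $\beta^{C}$ (using the first moment below) makes this rigorous.

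For the expectation, set $X_n = \beta^{C_n}$. Conditioning on $C_n = c$ and using the increment probability $\beta^{-c}$,
\begin{equation*}
\mathbb{E}[X_{n+1}\mid C_n = c] = (1 - \beta^{-c})\beta^{c} + \beta^{-c}\beta^{c+1} = \beta^{c} + (\beta - 1),
\end{equation*}
so $\mathbb{E}[X_{n+1}] = \mathbb{E}[X_n] + (\beta - 1)$. With $X_0 = \beta$, induction gives $\mathbb{E}[X_n] = (\beta-1)n + \beta$, which is exactly the stated expectation once the counter value is identified with $\log_\beta X_n$ (and confirms that $\hat n$ is unbiased).

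For the standard deviation I would run the same calculation on the second moment $Y_n = \beta^{2C_n}$:
\begin{equation*}
\mathbb{E}[Y_{n+1}\mid C_n = c] = (1 - \beta^{-c})\beta^{2c} + \beta^{-c}\beta^{2c+2} = \beta^{2c} + (\beta^2 - 1)\beta^{c},
\end{equation*}
hence $\mathbb{E}[Y_{n+1}] = \mathbb{E}[Y_n] + (\beta^2 - 1)\mathbb{E}[X_n]$. Summing this telescoping recurrence together with the closed form for $\mathbb{E}[X_n]$ from the previous step yields a quadratic in $n$; subtracting $(\mathbb{E}[X_n])^2$ and dividing by $(\beta-1)^2$ gives $\mathrm{Var}(\hat n)$, which to leading order is proportional to $(\beta - 1)n^2$. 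Plugging in $\beta = 2^{2^{-\delta}}$, so that $\beta - 1 = 2^{2^{-\delta}} - 1 = \Theta(2^{-\delta})$, the variance is $O(2^{-\delta}n^2)$ and the standard deviation of the recovered count scales like $\sqrt{\beta-1}\,n$, giving the stated bound.

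The main obstacle is the bookkeeping in the second-moment step: obtaining the closed form of $\mathbb{E}[Y_n]$ exactly, then passing correctly through the affine readout $\hat n = (\beta^{C}-\beta)/(\beta-1)$ to isolate $\mathrm{Var}(\hat n)$, and finally simplifying the substitution $\beta = 2^{2^{-\delta}}$ (via $\beta - 1 \approx 2^{-\delta}\ln 2$) so that the leading constant and the exponent of $2^{-\delta}$ emerge in the claimed form. The initialization $C_0 = 1$ is load-bearing, since it fixes the additive constant $\beta$ (rather than $1$) in the expectation, so I would track it throughout the recurrences.
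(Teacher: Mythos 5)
The paper never proves this statement---it imports it wholesale from Flajolet's 1985 analysis---so the right comparison is with the cited source, and your reconstruction follows exactly that route: the one-step moment recurrences for $\beta^{C_n}$ and $\beta^{2C_n}$, the affine unbiased readout, and the initialization $C_0=1$ (which, as you correctly note, is what produces the additive constant $\beta$ rather than $1$). Your conditional-expectation computations are right, and they do yield $\mathbb{E}[\beta^{C_n}]=(\beta-1)n+\beta$ and, after the telescoping you sketch, $\mathrm{Var}(\hat n)=(\beta-1)n(n+1)/2$.

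The genuine gap is your last step, where you declare that this "gives the stated bound." It does not. From $\mathrm{Var}(\hat n)=\Theta\left((\beta-1)n^2\right)$ and $\beta-1=2^{2^{-\delta}}-1=\Theta(2^{-\delta})$, the standard deviation is $\sqrt{\beta-1}\,n=\Theta\left(n\,2^{-\delta/2}\right)$---the square root hits the $2^{-\delta}$ as well, as your own final formula shows. The theorem as printed says "at most $n/2^{-\delta}$," i.e.\ $n\,2^{\delta}$, which is a typo (trivially true and useless); the reading the paper actually relies on later, in the proof of \autoref{thm:approx_count}, where $\delta=\log(1/\eps)+2\log k+8$ is chosen to force $\sigma\leq \eps n_d/256k^2$, is $\sigma\leq n\,2^{-\delta}$. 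Your derivation shows this intended bound is off by a factor of about $2^{\delta/2}$: with $\log\log n+\delta$ bits the achievable guarantee is only $\sigma\approx\sqrt{\ln 2/2}\;n\,2^{-\delta/2}$. So instead of asserting the match, you should have flagged the exponent discrepancy; the fix is to double $\delta$ (an additive change, so all of the paper's $O(\log\log n)$ memory and probability claims survive with adjusted constants), but the proof as you end it does not establish the theorem in its intended form. A smaller caveat in the same spirit: your recurrence gives $\log_\beta \mathbb{E}[\beta^{C_n}]$, not $\mathbb{E}[C_n]$, and by Jensen these differ---Flajolet's exact asymptotics for $\mathbb{E}[C_n]$ carry additional small oscillating terms---though here the paper's own phrasing is loose in precisely the same way.
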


\subsubsection*{Algorithm {\em Approximate Count}}

The starting agent sends a bit in either direction to wake all agents. When the endpoint in the $d$ direction wakes up, it sets a counter $c_d$ to 0, increments it as in \cite{Flajolet1985}, and sends the resulting value to its neighbor. Each agent upon receiving a message from direction $d$, stores the message as $c_d$, increments it in the same way and forwards the result to the next agent. 

When an agent has received two values of $c_d$, it does the following: For each $i$ in the sequence $1,\dots,k$, if $c_\ell-c_r \leq \log_\beta\frac{i}{k-i}$, then the agent decides on color $i$. If the agent has not decided on a color yet after all $i$, the agent decides on color $k$. After deciding on a color, the agent halts.

\begin{theorem}
\label{thm:approx_count}
Fix any $k$. For $n$ large enough, Algorithm Approximate Count solves the $\eps$-approximate $k$-Ribbon problem for constant $\eps<\frac{1}{2(k-1)}$ with probability $1-\frac{1}{32k}$ and requires $2n$ rounds, $O(n\log\log{n})$ total message bits, and $2\log\log{n} + O(1)$ bits of memory per agent.
\end{theorem}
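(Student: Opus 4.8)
The plan is to read the two running counters as independent instances of the Flajolet counter of \autoref{thm:flajolet} and then to reduce the per-agent correctness analysis to a handful of boundary events. Fix an agent at distance $x$ from the left endpoint on a ribbon of $n$ agents. By construction $c_\ell$ is the value of a Flajolet counter after $x$ increments and $c_r$ the value of an independent Flajolet counter after $n-x$ increments, so if $\hat L$ and $\hat R$ denote the counts recovered from $c_\ell$ and $c_r$ then $\E{\hat L}\approx x$ and $\E{\hat R}\approx n-x$. Since the expected counter value after $m$ increments is $\log_\beta((\beta-1)m+\beta)$, the test $c_\ell-c_r\le \log_\beta\frac{i}{k-i}$ is, up to the recovery transform, the test $\hat L/\hat R\le \frac{i}{k-i}$; choosing the smallest such $i$ therefore compares $x/(n-x)$ against the thresholds $\frac{i}{k-i}$, which are exactly the ideal band boundaries (at $x=\frac{i}{k}n$ one has $x/(n-x)=\frac{i}{k-i}$). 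Hence with exact counts the rule reproduces \autoref{obs:countobvs}, and the only source of error is counter fluctuation. The key structural observation is that each counter is monotone in its number of increments, so along a single realization $c_\ell-c_r$ is non-decreasing in $x$ and the event ``color $\le i$'' holds for all agents up to one random crossing position $X_i$. This lets me replace a union bound over all $n$ agents by a union bound over the $k-1$ crossings $X_1,\dots,X_{k-1}$.

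For the quantitative part I would set $\delta=O(1)$ (so that two counters occupy $2\log\log n+O(1)$ bits) and invoke the variance bound of \autoref{thm:flajolet}: the recovered estimate of a count $m$ has standard deviation a controllable fraction of $m$, where the relative standard deviation can be driven below any prescribed constant by increasing the constant $\delta$. Chebyshev's inequality then gives, for any target relative error $\gamma$, a bound of the form $\Pr{\,\lvert \hat L-L\rvert>\gamma L\,}\le c\,2^{-\delta}/\gamma^2$ (and similarly for $\hat R$), a constant independent of $n$. When both estimates lie within a $(1\pm\gamma)$ factor of their true values, the computed ratio $\hat L/\hat R$ lies within a $(1\pm\gamma)^2$ factor of $x/(n-x)$, so the decision can differ from the exact one only when $x/(n-x)$ is within that multiplicative window of a threshold $\frac{i}{k-i}$.

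Next I would translate this into a bound on each crossing position. For boundary $i$ at ideal location $x_i=\frac{i}{k}n$, the event $X_i<x_i-\eps n$ forces the agent at position $(\frac ik-\eps)n$ to already test above the $i$-th threshold, i.e.\ $\hat L/\hat R>\frac{i}{k-i}$ even though the true ratio $\frac{i/k-\eps}{1-i/k+\eps}$ is strictly below $\frac{i}{k-i}$; the event $X_i>x_i+\eps n$ is symmetric. The hypothesis $\eps<\frac1{2(k-1)}$ is exactly what guarantees that, for every $i$, the relative gap between the threshold $\frac{i}{k-i}$ and the true ratios at $x_i\pm\eps n$ is bounded below by a positive constant $\gamma=\gamma(k,\eps)$; this simultaneously keeps the inner intervals of the $\eps$-approximate specification nonempty and supplies the separation needed for concentration. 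Each crossing-position event is then contained in a relative-$\gamma$ deviation event for one of the counters, so by the Chebyshev bound it has probability at most $c\,2^{-\delta}/\gamma^2$. Choosing the constant $\delta$ so that this is at most $\frac{1}{32k(k-1)}$ and taking a union bound over the $k-1$ boundaries yields success probability at least $1-\frac1{32k}$, which establishes both parts of the $\eps$-approximate specification of \autoref{sec:message_model} (agents clearly inside a band get the right color, and each color's agents lie near its band).

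Finally the resource bounds are routine: the wake-up bit reaches both endpoints within $n$ rounds and the two counter values then propagate inward, so every agent holds both counters after at most $2n$ rounds; each forwarded message carries one counter value of $\log\log n+\delta=O(\log\log n)$ bits and only $O(n)$ messages are sent, giving $O(n\log\log n)$ total message bits; and each agent stores exactly the two counters, i.e.\ $2\log\log n+O(1)$ bits. I expect the main obstacle to be the probabilistic core: controlling the two counters through the $\log_\beta$ recovery transform and, above all, exploiting monotonicity so that a \emph{constant} failure probability $\frac1{32k}$ suffices rather than a $1/n$-per-agent guarantee, since a naive union bound over all $n$ agents would force $\delta=\Theta(\log\log n)$ and destroy the $O(1)$ memory overhead.
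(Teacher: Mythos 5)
Your proposal is correct and follows essentially the same route as the paper's proof: your monotone crossing-position argument is exactly the paper's \autoref{lem:thresh} reducing correctness to the $2(k-1)$ threshold agents at distance $\eps n$ from each border, and your constant-$\delta$ Chebyshev bound with a union bound over the $O(k)$ boundary events is precisely how the paper reaches the $1-\frac{1}{32k}$ guarantee with only $2\log\log n + O(1)$ bits per agent. The one step you defer---controlling the additive $\beta$ term in the recovery transform $\log_\beta((\beta-1)m+\beta)$---is handled in the paper by \eqref{eq:nasty} and \eqref{eq:beta}, using that the threshold agents are $\Omega(n/k)$ from the endpoints and the ``for $n$ large enough'' hypothesis, exactly as you anticipate in your closing paragraph.
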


We restrict $\eps<\frac{1}{2(k-1)}$ because otherwise the color thresholds would bleed into each other and we would have regions with more than two valid colors. The core idea of using an approximate counter as proposed in \cite{Flajolet1985} is that when subtracting the counter from the left and from the right, we get for some $\beta$, ignoring small standard deviations,

\[ \log_\beta( (\beta-1)n_\ell + \beta  )
-\log_\beta( (\beta-1)n_r + \beta  ) \approx\log(n_\ell/n_r).
\]
\vspace{1mm}

Using thresholds for each color then gives the right color. Using monotonicity of the counters, we only need to consider $O(k)$ different counters which allows us to take a union bound over $O(k)$ of them, showing that all $n$ counters are `correct'.

\begin{proof}[Proof of \autoref{thm:approx_count}]$ $
Clearly the algorithm terminates in at most $2n$ rounds. Our analysis below only works if agents have at least $2\log\log{n}+ O(1)$ bits of memory but does not require more than that. Only the approximate counts, which must fit in memory, are sent as messages, and only 2 messages are sent per round, which yields our message bit complexity.

\begin{lemma}
\label{lem:thresh}
The algorithm yields an $\eps$-approximation if the $2(k-1)$ "threshold" agents within distance $\eps n$ of a color threshold decide on the correct color.
\end{lemma}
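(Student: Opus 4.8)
The plan is to reduce the global $\eps$-approximation guarantee to the behaviour of finitely many agents by exploiting a monotonicity property of the decision rule. First I would observe that, for \emph{every} fixing of the internal coins of the counters, the quantity $c_\ell - c_r$ is non-decreasing as we move from left to right along the line. This is because the approximate counter of \cite{Flajolet1985} is monotone under propagation: each agent only ever increments the value it forwards, so $c_\ell$ (the count travelling rightward) never decreases as we move right, while $c_r$ (the count travelling leftward) never increases as we move right. Since the decision thresholds $\log_\beta\frac{i}{k-i}$ are strictly increasing in $i$, the decided color --- the smallest $i$ with $c_\ell - c_r \le \log_\beta\frac{i}{k-i}$, and $k$ otherwise --- is a non-decreasing function of $c_\ell - c_r$. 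Composing the two monotonicities shows that the coloring produced is monotone in position: colors appear in non-decreasing order left to right and therefore form $k$ contiguous bands.

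Given monotonicity, the key step is to localize all approximation error to the immediate neighborhoods of the $k-1$ ideal boundaries at positions $\tfrac{z}{k}n$, $z\in\{1,\dots,k-1\}$. For each such boundary I would single out its two threshold agents, at positions $(\tfrac{z}{k}-\eps)n$ and $(\tfrac{z}{k}+\eps)n$, whose correct colors are $z$ and $z+1$ respectively; this accounts for the $2(k-1)$ agents in the statement. If the left threshold agent decides color $z$, monotonicity forces every agent to its left to have color at most $z$; symmetrically, if the right threshold agent decides color $z+1$, every agent to its right has color at least $z+1$.

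It then remains to read off the two conditions in the definition of an $\eps$-approximate flag from these sandwiching facts. For the first condition, an agent whose coordinate lies in $[(\tfrac{z-1}{k}+\eps)n,(\tfrac{z}{k}-\eps)n]$ sits to the right of the right threshold agent of boundary $z-1$ (which has color $z$) and to the left of the left threshold agent of boundary $z$ (which also has color $z$), so monotonicity pins its color to exactly $z$. For the second condition, an agent of color $z$ must lie to the right of the left threshold agent of boundary $z-1$ (color $z-1$) and to the left of the right threshold agent of boundary $z$ (color $z+1$), hence its coordinate lies in $[(\tfrac{z-1}{k}-\eps)n,(\tfrac{z}{k}+\eps)n]$. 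The two extreme colors $z=1$ and $z=k$ are handled separately, since the corresponding outer boundary (at $0$ or $n$) is absent and the associated inequality then holds trivially.

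The main obstacle I anticipate is establishing the deterministic monotonicity of $c_\ell - c_r$ cleanly, i.e.\ arguing that it holds for every outcome of the counters' randomness rather than merely in expectation; this is exactly what allows all of the randomness and all of the approximation slack to be confined to the $O(k)$ threshold agents, so that a union bound over just these agents (carried out in the proof of \autoref{thm:approx_count}) suffices. Once monotonicity is secured, the remainder is routine bookkeeping over the band index $z$ together with the two edge cases.
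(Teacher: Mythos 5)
Your proof is correct and follows essentially the same route as the paper's: the paper's (much terser) argument likewise rests on the pathwise monotonicity of the counts along the line, which forces contiguous color bands and confines any error to the $2(k-1)$ threshold agents. Your elaboration --- making explicit that the Morris/Flajolet counter never decreases under propagation for any fixing of the randomness, that the decision rule is monotone in $c_\ell - c_r$, and the sandwiching bookkeeping for both conditions of the $\eps$-approximation --- simply spells out what the paper's two-sentence proof leaves implicit.
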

\begin{proof}
Note that the count from the left monotonically increases from left to right, and similarly for the count from the right from right to left. If the two threshold agents in the $i$th color band color themselves correctly, by the monotonicity of the counts, all colors between them will be $i$ as well, and all agents close to a threshold will only be one of the two bordering colors.
\end{proof}

By \autoref{thm:flajolet}, using at least $\log\log{n} + \delta$ bits for each counter and letting $\delta = \log(1/\eps)+2\log(k)+8$ makes the standard deviation of the value recoverable from the counter be $\sigma \leq \eps n_d/256k^2$ after $n_d$ increments, where $n_d$ is the number of other agents in the $d$ direction. Using Chebyshev's inequality and letting 
\[\beta = 2^{2^{-\delta}},\] the probability that the counter will store a value within $\log_{\beta}{((\beta - 1)(1 \pm \eps/16k)n_d + \beta)}$ after $n_d$ increments is at least $1-\frac{1}{128k^2}$.

Consider the agent at distance exactly $\eps n$ to the left of color border $i$. We require for this agent that $\frac{i-1}{k-i+1} \leq n_\ell/n_r\leq \frac{i}{k-i}$, where $n_\ell = (\frac{i}{k}-\eps)n$ and $n_r = (\frac{k-i}{k}+\eps)n$. We will show that this holds even with our approximate counters. We assume that the following inequality holds, and prove it later.

\begin{equation}\label{eq:nasty}
 \frac{n_\ell}{n_r} (1-5\varepsilon') \leq 
\frac{(\beta-1)n_\ell(1\pm\varepsilon')+\beta}{(\beta-1)n_r(1\pm \varepsilon') +\beta} \leq \frac{n_\ell}{n_r} (1+5\varepsilon'), 
\end{equation}
\vspace{1mm}

We set $\varepsilon' = \frac{\eps}{16k}$. This implies that $\log_{\beta}(\frac{n_\ell}{n_r} (1-5\eps/16k)) \leq c_\ell-c_r \leq \log_{\beta}(\frac{n_\ell}{n_r} (1+5\eps/16k))$. For our agent above, we have $n_\ell / n_r = \frac{i-k\eps}{k-i + k\eps}$, implying $\log_{\beta}(\frac{i-k\eps}{k-i + k\eps} (1-5\eps/16k)) \leq c_\ell-c_r \leq \log_{\beta}(\frac{i-k\eps}{k-i + k\eps} (1+5\eps/16k))$. Thus we need only verify that $\frac{i}{k-i} \geq \frac{i-k\eps}{k-i + k\eps} (1+5\eps/16k)$ and $\frac{i-1}{k-i+1} \leq \frac{i-k\eps}{k-i + k\eps} (1-5\eps/16k)$, which, noting that $\eps < \frac{1}{2(k-1)}$, can easily be shown. A similar argument shows the same result for agents up to distance $\eps n$ to the right of a border line.

With our choice of $\delta$, we thus succeed at each threshold agent with probability at least $1-\frac{1}{64k^2}$ (probability both counters succeed). The probability that all succeed is, by union bound, at least $1-\frac{2(k-1)}{64k^2} \geq 1 - \frac{1}{32k}$.





It only remains to prove \eqref{eq:nasty}, which we do in the following.
To show this we assume that the following inequality holds, which we will  prove later.
\begin{equation}\label{eq:beta}
    \beta \leq \varepsilon' (\beta-1) \min\{n_\ell, n_r\}.
\end{equation}
\vspace{1mm}
This implies, 
\[ \frac{(\beta-1)n_\ell(1\pm\varepsilon')+\beta}{(\beta-1)n_r(1\pm \varepsilon') +\beta} \leq \frac{(\beta-1)n_\ell(1+ 2\varepsilon')}{(\beta-1)n_r(1- \varepsilon')} \leq \frac{n_\ell}{n_r} (1+5\varepsilon'),\]
where we used that $\varepsilon' \in (0,2/5]$.
Similarly, using that $\varepsilon' \geq 0$, we get
\[ \frac{(\beta-1)n_\ell(1\pm\varepsilon')+\beta}{(\beta-1)n_r(1\pm \varepsilon') +\beta}
\geq
\frac{(\beta-1)n_\ell(1-\varepsilon')}{(\beta-1)n_r(1+2 \varepsilon')} \geq  \frac{n_\ell}{n_r} (1-5\varepsilon').
\]
Therefore, \eqref{eq:nasty} holds and it remains to prove \eqref{eq:beta}.
%
%
%
%
%
%
We note that \[\beta-1 =  \Omega(1),\] and so we have
\begin{align*}\eps'(\beta-1)\min(n_r,n_\ell) &= \frac{\eps}{16k} (\beta-1) \min(n_r,n_\ell)\\
 &\geq \frac{\eps}{16k} (\beta-1) \frac{n}{2k}\\
 &> \beta.  \end{align*}
Therefore \eqref{eq:beta} holds, which completes the proof. 

\end{proof}







\subsection{Extending from Ribbon to Flag for Randomized Algorithms}\label{sec:approx2D}
In this section we show how to extend {\em Approx Count} ensuring that round complexity is $O(n)$ and memory and message bit complexity is $O(n\log \log n)$. Of course, one could run the 1D algorithm independently in every row. However, this will result in many errors, which can be avoided in an elegant way.

\subsubsection*{Algorithm {\em Boost}}
Instead of simply copying the color of an agent through its whole column, we can boost the probability of assigning the correct color using the following algorithm.
For any column (in parallel) count each color separately by using message passing, starting at the top-most node of the column. Once a color reaches $T= 72 \log n$, declare it the winner, stop counting, and inform all other nodes on the column.

\begin{theorem}
\label{thm:2d_extend_approx}
Algorithm Boost solves the $\eps$-approximate $k$-flag problem for any constant $\eps< 1/2(k-1)$ with probability $1-1/n$ and requires $3n$ rounds, $O(n\log\log{n})$ total message bits, and $O(\log\log{n})$ bits of memory per agent.
\end{theorem}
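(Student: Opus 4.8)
The plan is to treat Algorithm \emph{Boost} as a two-phase procedure and analyze each phase separately. In the first phase every row independently runs Algorithm \emph{Approximate Count}, so that each agent $(i,j)$ receives a tentative color determined \emph{solely} by the randomness of row $i$; in the second phase each column independently takes a ``robust majority'' of its tentative colors by counting occurrences of each color until one reaches the threshold $T=72\log n$. I would first reduce correctness of the whole flag to a statement about a single column. Since \emph{Boost} repaints an entire column with one winning color, the column-uniformity requirement of the $k$-flag is automatic, and an argument analogous to \autoref{lem:thresh} shows that the result is an $\eps$-approximate flag as soon as, for every column $j$, the winning color is a \emph{valid} color for the horizontal position of $j$: the unique band color when $j$ satisfies condition~1 of the $\eps$-approximate flag definition (it lies well inside a single band), and one of the two adjacent band colors when $j$ lies in the $\eps$-margin of a border. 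Both conditions of the definition then follow, because a column acquires color $z$ only if it is well inside band $z$ or adjacent to a border of band $z$.

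The heart of the argument is the per-column success probability. First I would extract from the proof of \autoref{thm:approx_count} the per-agent guarantee: an agent whose horizontal position lies well inside a band is colored correctly whenever both of its approximate counters land in their target ranges, which by the same Chebyshev bound used there happens with probability at least $1-\frac{1}{64k^2}$. Crucially, across a fixed column these events depend on disjoint rows and hence are mutually independent. Setting $q=\frac{1}{64k^2}$, I would then run a Chernoff argument on the prefix of the column scanned by the counter. For an interior column, within the first $M=2T$ positions the correct color appears at least $T$ times except with probability $e^{-\Omega(T)}=n^{-\Omega(1)}$, while each of the $k-1$ wrong colors has expected count at most $qM\ll T$ and so reaches $T$ only with probability $n^{-\Omega(1)}$; off these bad events the correct color is the \emph{unique} color to reach $T$ and therefore wins regardless of scan order. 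For a border column one argues identically with window $M=4T$: the two valid colors jointly have density $1-q$, so some valid color reaches $T$ first while every invalid color stays below $T$. Because the constant $72$ in $T$ is chosen large enough, each per-column failure probability is at most $n^{-2}$, and a union bound over the at most $n$ columns yields overall success probability $1-\frac1n$.

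For the resource bounds I would argue as follows. The row phase is exactly Algorithm \emph{Approximate Count}, contributing its round, message, and memory costs from \autoref{thm:approx_count}; the column phase scans only an $O(T)=O(\log n)$-length prefix before a winner emerges and then propagates the winner along the column, for $O(a)$ additional rounds. Each boost message carries the $k=O(1)$ counters, each bounded by $T$ and hence of $O(\log\log n)$ bits, and each agent stores these same counters, giving $O(\log\log n)$ memory; combining the two phases yields the stated bounds.

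The main obstacle is the two-sided concentration in the scan: it is not enough that the correct color is a majority of the whole column --- it must reach the threshold $T$ \emph{before} any competitor does. Thus I must simultaneously control the lower tail of the correct count and the upper tail of every competing count \emph{within a fixed prefix}, and then merge this with the border-column case, where two colors are simultaneously acceptable and I only need that \emph{some} valid color reaches $T$ first. A secondary technical point I would flag is that the whole scheme needs columns of length $a=\Omega(\log n)$ so that the count can actually reach $T$; for shorter columns the threshold would have to be lowered toward a plain majority, at which point the amplification --- and hence the $1-\frac1n$ guarantee --- degrades, so I would either assume $a=\Omega(\log n)$ or restrict the claim to that regime.
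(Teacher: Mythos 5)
Your proposal is correct and shares the paper's overall architecture --- independent per-row runs of \emph{Approximate Count}, per-column counting up to the threshold $T=72\log n$, a Chernoff bound per column, and a union bound over the at most $n$ columns --- but the central concentration step is genuinely different. You control \emph{two} tails inside a fixed scan window ($M=2T$ for interior columns, $M=4T$ near a border): the valid color's count must reach $T$ (lower tail) while every invalid color stays below $T$ (upper tail), and you correctly identify this simultaneous control as the main technical obstacle. The paper only ever bounds \emph{one} tail: it considers the first $t=kT$ scanned rows, shows via Chernoff (with per-row failure probability $1/(6k)$, inherited directly from the statement of \autoref{thm:approx_count}) that no incorrect color reaches $T$ within these rows, each failing with probability at most $n^{-4}$, and then concludes by pigeonhole: since $kT$ rows have been tallied and the $k-1$ (resp.\ $k-2$) invalid counters are all below $T$, the correct counter is at least $t-(k-1)T\geq T$ (resp.\ the two valid counters jointly hold at least $t-(k-2)T\geq 2T$, so one of them reaches $T$). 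This pigeonhole dissolves exactly the obstacle you flagged --- no lower-tail bound on the correct color is needed, and no claim that the winner emerges within a window of a particular size --- and it also spares you from re-deriving an agent-level guarantee ($q=1/(64k^2)$ from the Chebyshev step) out of the proof of \autoref{thm:approx_count}; the coarser row-level bound suffices. Your window-based argument does go through with the constants you sketch, so both routes prove the theorem with the same resource bounds; the paper's is simply shorter. Finally, your secondary caveat is well taken and is in fact glossed over by the paper: the counters can only reach $T$ if the column length satisfies $a\geq kT=\Omega(k\log n)$, whereas the paper's phrase ``$k\cdot T<n$ rows were considered'' conflates the total number of agents $n=ab$ with the column length $a$. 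So the theorem as proved implicitly assumes $a=\Omega(k\log n)$, which is precisely the regime restriction you propose making explicit.
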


\begin{proof}[Proof of \autoref{thm:2d_extend_approx}] 
First observe that at most $k \cdot (\log\log n + 72)+O(1)=O(\log\log n)$ bits memory are necessary to implement the algorithm. 

Fix an arbitrary column. 
Observe that an agent that is `$\eps$-far' from a border is colored correctly w.p. at least $1-1/(6k)$ by \autoref{thm:approx_count}. 



Consider the time step $t=k \cdot T$ after which $k \cdot T < n$ rows were considered. Fix an incorrect color $c'$.
Let $X_i=1$ if the row $i$ has that incorrect color, otherwise $X_i=0$.
Let $X= \sum_{i\leq t} X_i$. Note that each row is correct independently.
We have $\E{X}=t/(6k)=T/6= 12 \log n$.
Thus, using Chernoff bounds, it holds that with probability at least $1-n^4$ the color $c'$ was not declared winning:

\[\Pr{ X \geq T} \leq \Pr{X \geq 2\E{X}}\leq \exp\left(- \frac{\E{X}}{3} \right) = \frac{1}{n^4}.  \]

Taking union bound over all incorrect colors shows that after $t$ steps no incorrect color can have been declared winner. 
First assume that the agent is at least $\eps$-far from all color borders.
Observe that there are at  $k-1$ incorrect colors and hence, by a pigeon hole argument,
we have that
the agents counter for its correct color is at least 
\[ t- (k-1)T \geq T. \]
Thus a correct color was correctly declared winning.
Now suppose the agent is close to the color border $(c_i,c_{i+1})$. Similarly as before the two counters have a value of at least 
\[ t- (k-2)T \geq 2T. \]
Thus at least one of the counters has a value larger than $T$ and was correctly declared winning.

By union bound, this holds for all columns w.h.p. since there are at most $n$ columns.



\end{proof}

\section{Missing Proofs}\label{sec:missing}

\begin{proof}[Proof of \autoref{thm:gradientlower}]
Fix an arbitrary grid $G$ with dimensions $a\times b$.
We will construct a grid $G'$
of dimensions $a' \times b'$ such that there exists a node (agent)
$u$ of $G$ with coordinates $(x,y)$ and a node 
$u'$ of $G'$ with coordinates $(x',y')$ receiving identical measured gradients all sources are at exactly the same distance.
$u$ needs to be white in $G$  (blue in $G'$, respectively) in order for the flag to be $\varepsilon$-approximation. 
Since the settings are indistinguishable, no algorithm achieving a $\varepsilon$-approximation is possible.

For the proof we assume w.l.o.g $a > b$, the proof for $a < b$ is analogous.

We fix the points we consider halfway up, such that $y = b/2$ and $y' = b'/2$. Our goal is to choose $(x', y')$
such that the following constraints are fulfilled 

\begin{equation} \label{eq:d1}
    x^2 + \frac{b^2}{4} = x'^2 + \frac{b'^2}{4}
\end{equation}

\begin{equation} \label{eq:d2}
    (x-a)^2 + \frac{b^2}{4} = (x'-a')^2 + \frac{b'^2}{4}
\end{equation}
\vspace{1mm}

ensuring that the distances to the corresponding sources are identical. Therefore, with this distance information alone, nodes in the two settings cannot distinguish between the two settings. We include the following two constraints to ensure that nodes in both settings must choose different colors, even in $\varepsilon$-approximate flags 

\begin{equation}\label{eq:cow}
    x = \left(\frac{1}{3}+\epsilon\right) a
\end{equation}

\begin{equation}\label{eq:cat}
    x' = \left(\frac{1}{3}-\epsilon\right) a'
\end{equation}
\vspace{1mm}

We solve this equation by showing that there exist indeed points $(x',y')$ of a $a'\times b'$ flag such that all of the equations are fulfilled with physically meaningful values, implying that the coloring decision based on these distances alone cannot be size invariant.

Subtracting  \eqref{eq:d2} from \eqref{eq:d1} gives
$
    x^2 - (x-a)^2 = x'^2 - (x'-a')^2
$, 
which is equivalent to
$
    2ax - a^2 = 2a'x' - a'^2
$.
We can now plug in \eqref{eq:cow} and \eqref{eq:cat} into this and derive

\begin{align*}
    2ax - a^2 &= 2a'x' - a'^2\\
    \Leftrightarrow
     2a^2\left(\frac{1}{3}+\epsilon\right) - a^2
     &= 
        2a'^2\left(\frac{1}{3}-\epsilon\right) - a'^2\\
       \Leftrightarrow
     a^2\left(-\frac{1}{3}+2\epsilon\right)
     &= 
         a'^2\left(-\frac{1}{3}'2\epsilon\right)\\  
\end{align*}
Thus we can choose
$
 a' = a \sqrt{ \frac{1/3-2\epsilon}{1/3+2\epsilon} }
$, which is positive for $\varepsilon < 1/6$. Plugging this into \eqref{eq:cat} gives
 $x' = \left(\frac{1}{3}-\epsilon\right)  \sqrt{ \frac{1/3-2\epsilon}{1/3+2\epsilon} }$, which is also positive for $\varepsilon < 1/6$.
 Finally we have
 \[ b' = 4x^2 + b^2 -4x'^2 = 4 \left( \left(\frac13+\epsilon\right) a \right)^2 -
 4\left( \left(\frac13-\epsilon\right)  \sqrt{ \frac{1/3-2\epsilon}{1/3+2\epsilon} } a  \right)^2    
 +     b^2 >0. \]
 \vspace{1mm}
 
Thus, putting everything together, we get valid solutions
for $x',y',a'$ and $b'$ yielding the claim.

\end{proof}

\end{document}